\documentclass[paper=a4, 12pt, twoside]{article} 
\usepackage{lastpage}
\newcommand{\pyear}{2026}  
\newcommand{\vol}{24}  
\newcommand{\no}{2} 
\newcommand{\stpage}{1}  

\newcommand{\enpage}{\pageref{LastPage}} 
\usepackage[top=2.54cm, bottom=2.54cm, left=2.54cm, right=2.54cm]{geometry} 
\usepackage{times} 
\usepackage{lmodern} 
\usepackage{hanging} 
\usepackage{graphicx} 
\usepackage[english]{babel}  
\usepackage{natbib}
\usepackage{fancyhdr} 
\usepackage{bbm,amsthm}
\newtheorem{lem}{Lemma}
\newtheorem{rem}{Remark}

\fancypagestyle{frontpagefooter}{%
  \fancyhead{} 
  \fancyfoot[L]{\footnotesize 
    Corresponding Authors: Garrett Frady and Shariq Mohammed \\
    Emails: garrett.frady@uconn.edu; shariqm@bu.edu
    }
  \fancyfoot[C]{\footnotesize }	  
  \fancyfoot[R]{\footnotesize } 
}
\usepackage{amsthm}

\newtheoremstyle{sa}
{3mm} 
{2mm} 
{} 
{} 
{\bfseries} 
{:} 
{.5em} 
{} 

\theoremstyle{sa}

\usepackage{amsmath,amssymb}
\usepackage{hyperref} 
\usepackage{lscape}
\usepackage{arydshln}
\usepackage{booktabs}
\usepackage{verbatim}
\usepackage{enumitem}
\usepackage{bm}
\DeclareMathAlphabet{\mathpzc}{OT1}{pzc}{m}{it}
\usepackage{wasysym}
\usepackage{esint}
\usepackage{xcolor}
\usepackage{tikz}
\usetikzlibrary{shapes,backgrounds,arrows,chains,matrix,positioning,scopes}
\usepackage{caption}
\usepackage{titlesec}

\titleformat{\section}[block]{\bf \raggedright}
{\makebox[1.27cm][l]{\thesection.}}{0in}{} 
\titlespacing*{\section}{0pt}{3mm}{0pt}

\titleformat{\subsection}[block]{\bf}
{\makebox[1.27cm][l]{\thesubsection.}}{0in}{}  
\titlespacing*{\subsection}{0pt}{3mm}{0pt}

\titleformat{\subsubsection}[block]{\bf}
{\makebox[1.27cm][l]{\thesubsubsection.}}{0in}{}  
\titlespacing*{\subsubsection}{0pt}{3mm}{0pt}

\usepackage{indentfirst}  
\usepackage[labelfont=bf,textfont=bf]{caption}  
\begin{document}
\thispagestyle{empty}

\thispagestyle{frontpagefooter}
 
 \begin{minipage}[t]{0.65\textwidth}
 	\vspace{0pt}
 	\begin{flushleft}
 		{Statistics and Applications} {\{ISSN 2454-7395 (online)\}} \\
 		Volume \vol, No. \no, \pyear \ (New Series), pp \stpage--\enpage\\
 		\href{http://www.ssca.org.in/journal.html}{{{http://www.ssca.org.in/journal}}}
 	\end{flushleft}
 \end{minipage}
 \begin{minipage}[t]{0.3\textwidth}
 	\vspace{-2pt}
 	\begin{flushright}
 		\includegraphics[width=0.5\textwidth]{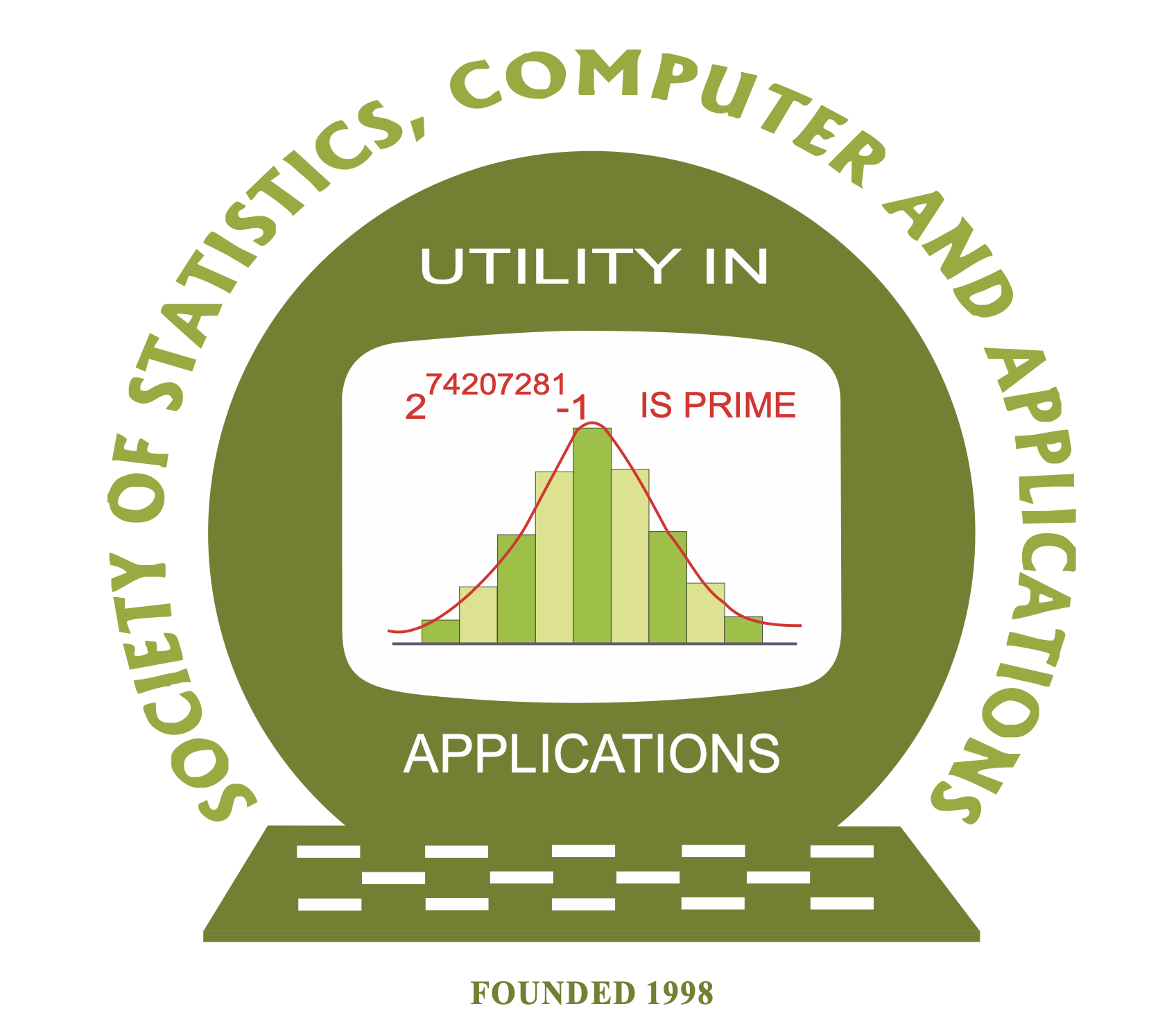}
 	\end{flushright}
 \end{minipage}

%
%

\begin{center}
{\fontsize{16}{24} \selectfont 
{\bf Bayesian Feature Extraction using Gaussian and Diffused-gamma Priors for High Dimensional Spatio-Temporal Data} 
 } \\[6mm]
\normalsize
{\bf Garrett Frady$^{1}$, Dipak K. Dey$^{1}$, and Shariq Mohammed$^{2,3}$} \\ 
{\it $^{1}$Department of Statistics, University of Connecticut, Storrs, Connecticut, U.S.A. \\
$^{2}$Department of Biostatistics, Boston University, Boston, Massachusetts, U.S.A. \\
$^{3}$Rafik B.Hariri Institute for Computing, Computational Science and Engineering, \\ Boston University, Boston, Massachusetts, U.S.A.
}

Received: dd Month yyyy; Revised: dd Month yyyy; Accepted: dd Month yyyy 
\end{center}
\vspace{-6mm}

\setlength\parindent{1.27cm}

\noindent\rule{\linewidth}{0.8pt}\vspace{-3mm}

\noindent{\bf Abstract}\vspace{-1mm}

High-dimensional data with sparse structure and spatio-temporal dependence arise in many scientific domains. We develop a Bayesian feature-extraction framework for spatio-temporal settings that employs Gaussian and Diffused-gamma priors to induce structured sparsity. The modeling framework specifies a general likelihood via Bregman divergence, enabling compatibility with a range of loss functions and measurement models. Posterior computation is carried out via Markov chain Monte Carlo (MCMC), and we introduce a two-stage feature-extraction procedure based on posterior samples to stabilize selection across space and time. We illustrate the method with a multi-subject electroencephalography (EEG) case study examining the relationship between chronic alcohol exposure and activity in different brain regions. We first fit binary classification models at each time point, then use false discovery rate-controlled screening and subsequent clustering in a two‑stage feature‑extraction pipeline to identify active brain regions. The analysis demonstrates that our proposed priors improve recovery of sparse features and enhance interpretability in the presence of spatio-temporal dependence. The framework is broadly applicable to high-dimensional, structured problems where accurate feature selection and inference are required. The code to implement the model is publicly available via GitHub.

\noindent {\it Key words:} 
{Bregman divergence; false discovery rate; Markov chain Monte Carlo (MCMC); variable selection; shrinkage priors} 

\noindent {\bf AMS Subject Classifications:} 62F15, 62H30, 62P10, 62-08 
\vspace{-5mm} 

\noindent\rule{\linewidth}{0.8pt} 
\setcounter{equation}{0}

\section{Introduction}

\label{s:intro}

Due to vast improvements in technology and data collection methods, performing statistical analyses commonly involves modeling massive amounts of data containing a significantly larger number of parameters than outcomes. In such cases, classical statistical models are inadequate and some form of regularization, dimension reduction, or variable selection is required. Additionally, data often includes matrix covariates with correlations extending across multiple dimensions. For example, predicting natural disasters and discovering global warming trends require analyzing three-dimensional data capturing dynamic spatial patterns changing over time. Beyond the evolution of environmental phenomena, spatio-temporal data analysis can provide valuable insights into distribution and determinants of public health outcomes. Measurements from neuroimaging procedures, such as electroencephalography (EEG) and functional magnetic resonance imaging (fMRI), may be collected in a spatio-temporal pattern to examine statistical dependencies between electrical signals at different regions of the brain. Understanding how locations of the brain communicate plays a major role in detecting early onset of mental-related illnesses, including dementia, opioid addiction, alcoholism, and attention deficit hyperactivity disorder (ADHD). Although spatio-temporal data arises across a wide variety of scientific domains, this paper develops the proposed framework exclusively using EEG data as the motivating example. The minimum data structure required for the framework  to apply is a three-dimensional array of measurements over $n$ subjects,  $L$ spatial locations, and $\tau$ time points, paired with a subject-level  response variable, where $n \ll L \times \tau$.

Designing a supervised predictive model of subject-level behavior using neuroimaging data comes with its difficulties. Brain imaging techniques generally produce tens of thousands to hundreds of thousands of measurements for each subject, acting as covariates, and only consist of tens to hundreds of subjects because of the invasive nature of brain imaging procedures. Different natural biological behavior results in high subject-to-subject variability. The spatio-temporal structure yields two-dimensional predictors and correlation exists across multiple dimensions. Such a correlation structure may not be fully captured in a computationally efficient manner through traditional statistical or machine learning methods. The sheer size of the data will likely be too large for the memory available on a single computer node. Conducting analysis in an efficient manner requires efficient and scalable strategies.

We present a case study utilizing EEG data as our motivating example. The primary goal of this paper is to identify active brain regions that are associated with cognitive or behavioral states implying potential chronic exposure to alcohol. Then, classify subjects as at risk, or not, of alcoholism according to their EEG measurements. There have been approaches in literature that address one or both of these objectives. We conduct a thorough literature review to examine prior methodologies address one or both of these objectives, and highlight new contributions in our approach. We propose a computationally efficient Bayesian-feature extraction method utilizing the Gaussian and Diffused-gamma (GD) prior \citep{goh2018bayesian} by constructing models locally at each time point. Covariates in each model are electrical signals observed at locations. Estimation is carried out using the Markov chain Monte Carlo (MCMC) algorithm and the Metropolis-Hastings algorithm within Gibbs sampler. MCMC samples obtained from the GD model are used to develop a two-stage feature extraction process to make final estimations and select active locations, and a subject-level weighted prediction process to accurately classify subjects. 

A substantial body of research has been allocated to developing accurate and implementable methods to handle data with the complex characteristics previously described. The most common approach to reduce the dimension of a problem is vectorizing the matrix covariates. Since the number of covariates, $p$, in neuroimaging data is enormous, vectorizing the covariates will not bypass the high dimensional issue. To address this and the difficulties described in Section A.1 of the Annexure, \cite{hu2015local} developed a local-aggregate modeling approach applied to EEG data. By constructing models at each location, the dimension of the predictor vectors is equivalent to the number of time points, tending to the order of hundreds up to thousands. Thus, the local modeling approach requires additional effort to reduce the dimension of the supervised learning problem. 

\subsection{Sparsity, Feature Extraction, Penalized Loss Function Estimator}
\label{sec:sparsefeatext}

Even with large $p$, in several real-world applications, only a small portion of the predictors may have a significant impact on the response (leading to sparsity within high dimensional settings). For example, in neuroimaging research, only a few of the brain locations may be significantly correlated to predicting early onset of a mental-related illness. Dimension reduction is a matter of removing irrelevant predictors from the model by forcing corresponding coefficients to zero. To this end, multiple studies \citep{tibshirani1996regression, fan2001variable, zou2006adaptive} have employed the penalized loss function (PL) estimator to achieve simultaneous estimation and feature extraction. The PL estimator is defined as 
\begin{gather}
    \hat{\boldsymbol{\beta}}_{PL} = \underset{\boldsymbol{\beta}}{\text{argmin}} [L(\boldsymbol{y}, \boldsymbol{h(X\beta)}) + Pe(\boldsymbol{\beta} | \lambda)], \label{plest}
\end{gather}
where $L(\cdot, \cdot)$ denotes a loss function, $Pe(\cdot | \lambda)$ is a penalty function, and $\lambda \geq 0$ is a regularization parameter intended to control the degree of penalization. An abundance of research has been devoted to developing sparsity inducing penalty functions. First formally developed by \cite{akaike1974new} and further investigated by \cite{schwarz1978estimating}, the $\ell_0$-norm penalty, $\ell_0(\boldsymbol{\beta} | \lambda) = \lambda \sum_{j = 1}^p \mathbbm{1}_{(\beta_j \neq 0)}$, where $\mathbbm{1}_{(\cdot)}$ is an indicator function, was introduced to induce sparsity into $\hat{\boldsymbol{\beta}}_{PL}$. Quantifying $\hat{\boldsymbol{\beta}}_{PL}$ remains computationally cumbersome caused by the discontinuous nature of indicator functions. Many methods followed in attempt to improve previous methods (see examples in Section A.1 of the Annexure). 

Although PL point estimators are attractive, \cite{casella2010penalized} pointed out that frequentist LASSO methods are unable to produce valid standard errors, rendering them unable to accurately quantify uncertainty in the estimators. In turn, Bayesian techniques became popular in sparse, high dimensional literature. An advantage of our local Bayesian modeling approach over the frequentist local modeling approach in \cite{hu2015local} is that we are able to quantify the uncertainty in the estimators through the posterior distribution of $\boldsymbol{\beta}$. A potential limitation of Bayesian approaches is that for large values of $p$, the MCMC sampling algorithm proves to be computationally inefficient and ineffective \citep{rovckova2014emvs}. To overcome this deficiency in the analysis of EEG data, \cite{mohammed2019bayesian} proposed a Bayesian local-aggregate modeling strategy by constructing continuous spike-and-slab models at each time point. Building models locally at time points instead of locations reduces the dimension further, since EEG data has high temporal resolution but low spatial resolution. An advantage of our GD model over spike-and-slab models is that we have less hyperparameters to tune and do not require extensive computational resources or sensitivity analysis to find optimal values.

\underline{Novelty of the proposed methods}. Existing approaches, such as \cite{goh2018bayesian}, rely on a component-wise iterated conditional modes (ICM) algorithm \citep{besag1986statistical} within MCMC and impose sparsity through a hard threshold determined by a regularization parameter, $\lambda$. Our method addresses these limitations by introducing a fully Bayesian framework that avoids ad hoc thresholding and manual tuning. Specifically, we (i) place a non-informative gamma prior on $\lambda$ rather than performing a grid search as in \cite{goh2018bayesian}, reducing computational burden and improving robustness; and (ii) develop a novel two-stage feature extraction pipeline based on posterior samples, combining FDR-controlled screening with spatial-temporal clustering and weighted subject-level prediction. This integration of structured priors, adaptive sparsity induction, and posterior-driven feature selection represents a significant advancement over existing methods, enabling more reliable recovery of sparse features in high-dimensional spatio-temporal settings.

The outline for the remaining sections in this paper is as follows. In Section \ref{sec:eeg}, we describe the EEG data used in our case study, the procedure and characteristics of its measurements. In Section \ref{sec:lbm}, we explain our local Bayesian modeling framework. We define the GD prior and its relationship with a continuous-and-differentiable $\ell_0$-norm approximation, discuss the role of the class of Bregman divergence measures in our approach, present a computationally feasible MCMC algorithm, and give background theory for posterior computation under a specified model with the GD prior. Additionally, we describe our two-stage feature extraction process and subject-level weighted prediction process. In Section \ref{sec:numstud}, we conduct a simulation study and an EEG case study. We compare the performance of our GD method with existing methods in the literature. Finally, we provide concluding remarks and potential future developments in Section \ref{sec:conclusion}. Additional details and proofs are provided in the Annexure. 

\section{Electroencephalography Data}
\label{sec:eeg}

EEG is a popular non-invasive brain-imaging technique administered as a way to record electrical brain signals over a specified interval of time via electrodes attached to different locations of a subject's scalp. Consequently, researchers commonly conduct EEG analysis to examine the functional connectivity between different parts of the brain in response to a stimulus. From a statistical viewpoint, detecting associated brain regions in such a manner may be viewed as a feature extraction process. The non-invasive aspect of EEG leads to a low signal-to-noise ratio, because the subjects' skull, connective tissue, and body fluid between the brain and EEG electrodes significantly debilitates the electrical signal \citep{tam2019human}. 

In our case study, we use open access EEG data obtained from \url{https://archive.ics.uci.edu/ml/datasets/eeg+database}. A clean version of this data set is available in an \texttt{R} package \texttt{SpikeSlabEEG} on GitHub \citep{mohammed2019bayesian}. The data surfaces from a large study of 120 trials for 122 subjects to investigate EEG correlates of genetic predisposition to alcoholism. Measurements were taken from 64 electrodes placed on the subjects’ scalp by sampling at 256 Hz for 1 second (3.9-ms epochs). Photos from \cite{snodgrass1980standardized} were shown to each subject as a stimulus. In our analysis, we consider 57 of the 64 locations as they are the only locations with coordinates associated to them. Hence, the predictor tensor, $\chi$, has dimensions $122 \times 57 \times 256$. As EEG captures alterations in voltage resulting from the movement of electrically charged ions across neuronal cell membranes, the measurements are continuous. In order to visualize the spatio-temporal behavior of the data, we present event related potential (ERP) plots and scalp maps. For more information on these visualization techniques, see Section A.4 of the Annexure. 

\begin{figure}[!t]
    \centering
    \includegraphics[scale=0.8, trim=0.15cm 0.325cm 0.2cm 0.175cm, clip]{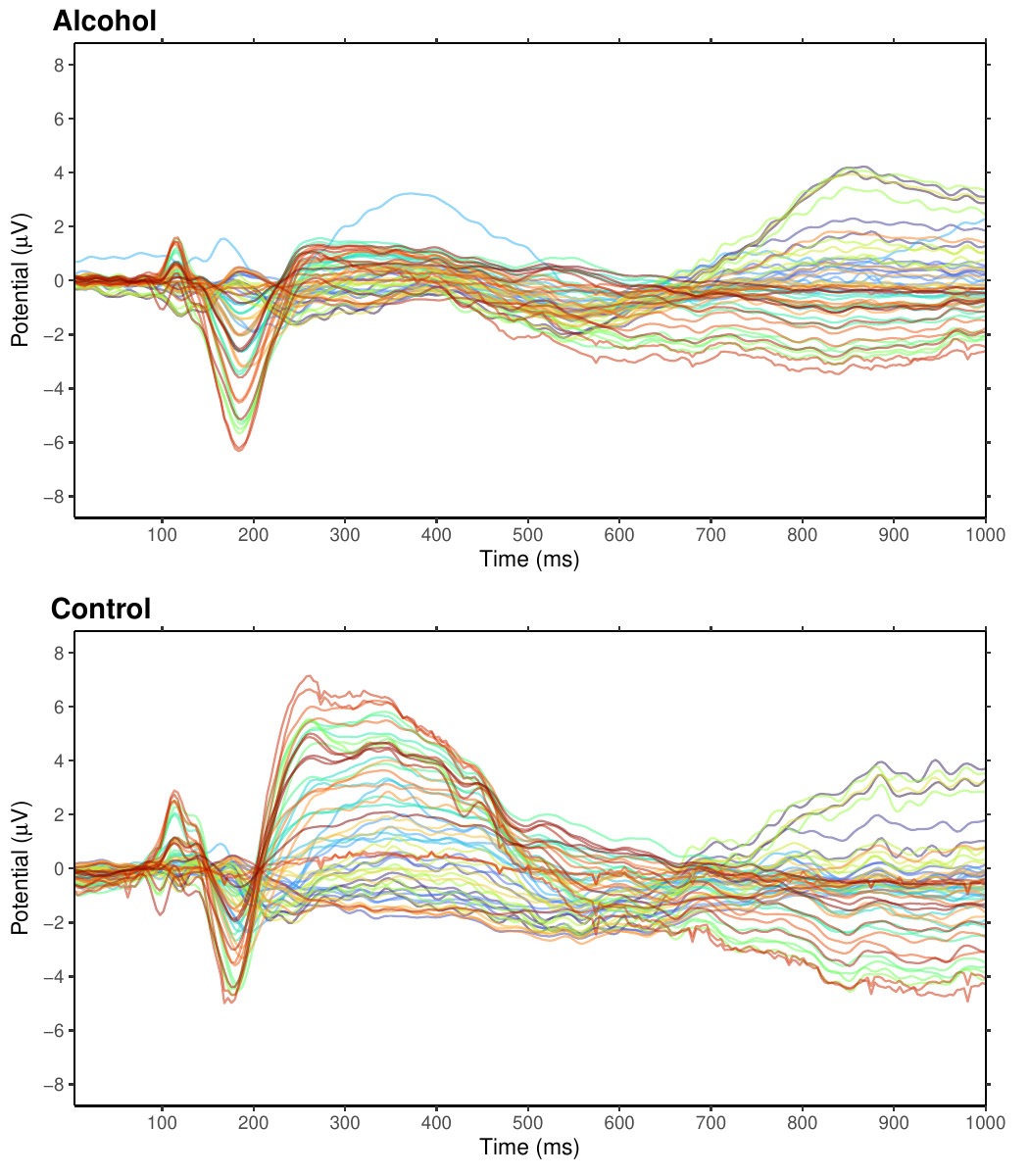}
    \caption{ERP plots for the alcohol and control group.}
    \label{fig:erp-plots-alc-ctrl}
\end{figure}

Figure \ref{fig:erp-plots-alc-ctrl} shows the ERP plots for the alcoholic group (77 subjects) and the control group (45 subjects). The two plots consist of 57 time series curves representing the average potential of subjects within a group at 57 locations over 256 time points. Each of the 57 locations correspond with the same unique colored curve in both plots. The scalp maps in Figure \ref{fig:scalpmaps-alc-ctrl} represent the distribution of the average EEG signals at three specified time points. Since we employ local models at each time point, visualizing the localized distribution of mean potential across all subjects at a specific time point is beneficial. Looking at Figure \ref{fig:erp-plots-alc-ctrl} and Figure \ref{fig:scalpmaps-alc-ctrl}, the magnitude of the average signals for the control group within the 225-500ms time interval are distinctively larger than that of the alcohol group, especially in the outer regions of the brain in locations illustrated with red shading on the scalp maps. The observed disparities serve as both inspiration and a crucial prerequisite, illuminating the potential not only to decipher functional connections between brain regions but also to classify subjects into either the alcohol or control group. Furthermore, the ERP plots and scalp maps in Figure \ref{fig:erp-plots-alc-ctrl} and Figure \ref{fig:scalpmaps-alc-ctrl} show that the set of locations with higher magnitudes of average potential varies over time. That is, different locations are active at different time points. 

\begin{figure}[!t]
    \centering
    \includegraphics[scale=0.55, trim=4.5cm 3.5cm 6.5cm 6.cm, clip]{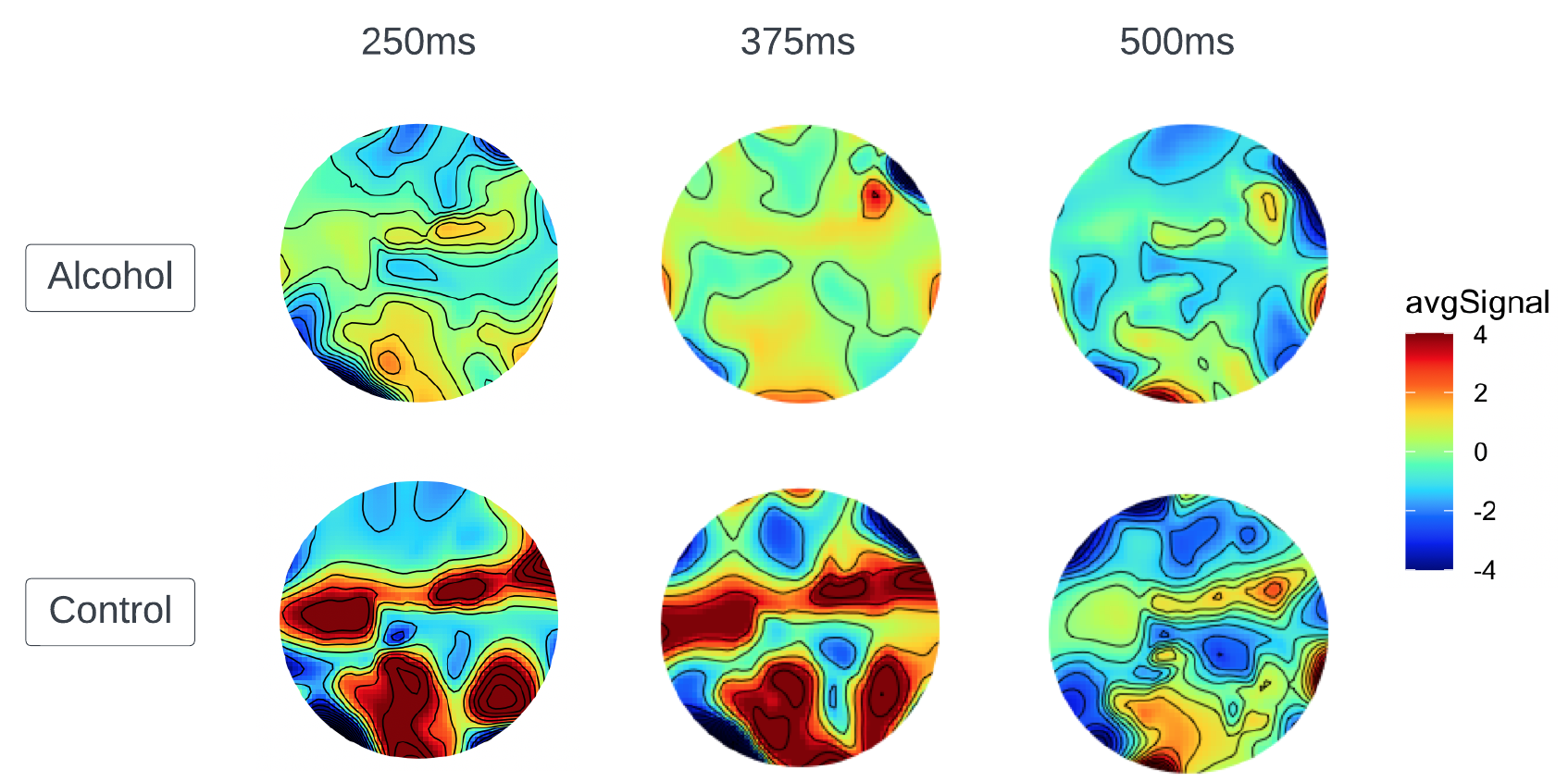}
    \caption{Scalp maps at 250ms, 375ms, and 500ms for the alcohol group and the control group.}
    \label{fig:scalpmaps-alc-ctrl}
\end{figure}

\section{Local Bayesian Modeling Framework}
\label{sec:lbm} 

As a way of bypassing the curse of dimensionality in high dimensional spatio-temporal data, we construct subject-level Bayesian classification models through logistic regression, locally at each time point. The GD prior is assigned to model coefficients, which correspond to locations. Although the GD prior inherently possesses shrinkage and selection properties, the GD prior cannot produce sparse solutions on its own. As a result, we implement a two-stage feature extraction process, with the first stage being a hard thresholding technique based on controlling the false discovery rate (FDR). The second stage consists of clustering locations based on the area under the curve computed from plotting the estimated coefficient vectors of each location. We aggregate the predictions of the local models over time to make final predictions. In the scope of this modeling perspective, we develop a general likelihood function, define the GD prior distribution, specify priors on the hyperparameters, and demonstrate the derivation of the posterior distribution.

\subsection{GD Prior Specification and Posterior Computation}
\label{sec:gdpriorspec}

Following the formulation of $\hat{\boldsymbol{\beta}}_{PL}$ in Eq. (\ref{plest}), we choose the wide-ranging class of Bregman divergence measures as our loss function. Depending on the form of the convex function, Bregman divergence measures reduce to well-known loss functions \citep{goh2018bayesian}. See Table A.1 in Section A.2.1 of the Annexure for examples. To develop a general likelihood function, we highlight the \textit{duality property} (explained in Section A.2.1 of the Annexure) between the loss function and the likelihood function. We define our likelihood function as $f_{\psi}(\boldsymbol{y} | \boldsymbol{\beta}) \propto \exp\left\{- BD_{\psi}(\boldsymbol{y}, \boldsymbol{h}(\boldsymbol{X\beta}))\right\}$,
with the assumption that $f_{\psi}(\boldsymbol{y} | \boldsymbol{\beta})$ is integrable. \cite{banerjee2005clustering} showed that there is a bijection between the natural exponential family and the class of Bregman divergence measures. The flexibility of the class of Bregman divergence measures allows us to handle a variety of response variables and likelihood functions. See Table A.2 in Section A.2.1 of the Annexure for examples. With the EEG data analysis, one of our goals is binary classification of subjects. Thus, we use the Bregman divergence measure uniquely defined for the Bernoulli family of distributions, $z_1\log(\frac{z_1}{z_2}) + (1 - z_1)\log(\frac{1 - z_1}{1 - z_2})$, which leads to our likelihood function being proportional to the Bernoulli likelihood function (see Section A.2.1 of the Annexure).

The GD prior is defined as follows \citep{goh2018bayesian} 
\begin{gather}
\pi(\boldsymbol{\beta}, \boldsymbol{d}) \propto \pi_G(\boldsymbol{\beta} | \boldsymbol{d}) \pi_D(\boldsymbol{d}), \label{gd} \\
\pi_G(\boldsymbol{\beta} | \boldsymbol{d}) \propto \prod_{j=1}^L \left[d_j^{1/2} \exp\left\{-\frac{d_j}{2} \beta_j^2\right\}\right], \label{g} \\
\pi_D(\boldsymbol{d}) \propto \prod_{j=1}^L \left[d_j^{\lambda - 1/2} \exp\left\{-\frac{\tau_0^2}{2\{(\boldsymbol{X}_{[, j]})^T\boldsymbol{X}_{[, j]}\}} d_j\right\}\right], \label{d}
\end{gather}
where $\tau_0 > 0$ is chosen to be sufficiently small. A critical piece of the GD prior is that it leads to a continuous and differentiable approximation of the $\ell_0$-norm penalty, which acts as the penalty function in Eq. (\ref{plest}). \cite{goh2018bayesian} introduced an $\ell_0$-norm approximation, referred to as the GD penalty, $\tilde{\ell}_0(\boldsymbol{\beta} | \lambda, \tau_0) = \lambda \sum_{j = 1}^L \frac{\{(\boldsymbol{X}_{[, j]})^T\boldsymbol{X}_{[, j]}\}\beta_j^2}{\tau_0^2 + \{(\boldsymbol{X}_{[, j]})^T\boldsymbol{X}_{[, j]}\}\beta_j^2}$, where $\boldsymbol{X}_{[, j]}$ denotes the $j^{th}$ column of the design matrix $\boldsymbol{X}$. We can see that for given values of the tuning parameter $\lambda$ and coefficient vector $\boldsymbol{\beta}$, $\tilde{\ell}_0(\boldsymbol{\beta} | \lambda, \tau_0) \rightarrow \lambda \sum_{j=1}^L \mathbbm{1}_{(\beta_j \neq 0)}$ as $\tau_0 \rightarrow 0$. Furthermore, $\tilde{\ell}_0(\boldsymbol{\beta} | \lambda, \tau_0$) is continous and differentiable everywhere. Of the many shrinkage and selection operators developed to induce sparsity into the PL estimator, \cite{dicker2013variable} introduced a continuous approximation of the $\ell_0$-norm penalty called the seamless-$\ell_0$ (SELO) penalty. However, it has restrictions in that it is not differentiable everywhere, lending an advantage to the GD penalty over the SELO penalty. We define the SELO penalty and provide a numerical demonstration of the advantage the GD penalty has over the SELO penalty in Section A.2.2 of the Annexure. A disadvantage of the GD penalty function is that it can't induce sparsity into the estimator unless $\tau_0 = 0$. We overcome this difficulty through our two-stage feature extraction process. 

In our GD approach, the hyperparameter $\lambda$ governs the degree of sparsity in the estimator of $\boldsymbol{\beta}$. Prior work, such as \cite{goh2018bayesian}, selects $\lambda$ via a grid search based on marginal likelihood, which introduces substantial computational overhead and relies on a frequentist tuning procedure within an otherwise Bayesian model. We propose a fully Bayesian alternative by assigning a gamma prior to $\lambda$, with density $\pi(\lambda) \propto \lambda^{\alpha_1 - 1} \exp\{-\alpha_2 \lambda\}$,
where $\alpha_1 > 0$ and $\alpha_2 > 0$ are shape and rate parameters, respectively. This prior is chosen to be weakly informative---flat enough to avoid imposing strong assumptions, yet structured to reflect the practical insight that larger $\lambda$ induces greater sparsity. By centering the prior's mode near zero (but not at zero), we encourage moderate sparsity without collapsing all coefficients. This innovation eliminates the need for computationally expensive grid searches, preserves Bayesian coherence, and improves robustness by integrating uncertainty about $\lambda$ directly into posterior inference.

When latent variables are introduced into regression models, common practice is to integrate out the latent variable from the likelihood function and conduct inference only on the parameter of interest. In Bayesian regression models, the latent variable is integrated out of the posterior distribution and posterior inference proceeds only for the parameter of interest. Rather than integrating out $\boldsymbol{d}$, an essential component of our GD approach is to substitute $\boldsymbol{d}$ with its MAP estimator. Without this aspect of our approach, we lose the crucial connection between the GD prior and the GD penalty function described in Lemma 2, Remark 1, and their proofs in Section A.5 of the Annexure. Using these justifications, we can see that the GD estimator is a nice approximation of the PL estimator with a Bregman divergence measure as the loss function and $\tilde{\ell}_0(\cdot | \lambda, \tau_0)$ as the penalty function. \cite{tibshirani1996regression} pointed out that in a Bayesian framework, the PL estimator in Eq. (\ref{plest}) can be viewed as the MAP estimator of the posterior distribution given by $\pi(\boldsymbol{\beta} | \boldsymbol{y}) \propto f(\boldsymbol{y} | \boldsymbol{\beta}) \pi(\boldsymbol{\beta}) \propto \exp\left\{-[L(\boldsymbol{y}, h(\boldsymbol{X\beta})) + Pe(\boldsymbol{\beta} | \lambda)]\right\}$.

Since $\pi_{GD}(\boldsymbol{\beta}, \boldsymbol{d}) \propto \pi_G(\boldsymbol{\beta} | \boldsymbol{d})$ with respect to $\boldsymbol{\beta}$, the posterior distribution of $\boldsymbol{\beta}$ can be viewed as $\pi(\boldsymbol{\beta} | \boldsymbol{y}, \hat{\boldsymbol{d}}) \propto f_{\psi}(\boldsymbol{y} | \boldsymbol{\beta}) \pi_{GD}(\boldsymbol{\beta}, \hat{\boldsymbol{d}}) \propto f_{\psi}(\boldsymbol{y} | \boldsymbol{\beta}) \pi_G(\boldsymbol{\beta} | \hat{\boldsymbol{d}})$, where $\hat{\boldsymbol{d}}$ is the MAP estimator of $\boldsymbol{d}$. We outline the propriety of the GD posterior in Section A.2.3 of the Annexure. Due to the high dimensionality of our parameter space, we utilize the univariate full conditional distributions, given by $\pi(d_j | \text{others}) \propto d_j^{\lambda} \exp\{-\frac{\tau_0^2 / \{(\boldsymbol{X}_{[, j]})^T\boldsymbol{X}_{[, j]}\} + \beta_j^2}{2} d_j\}$ and $\pi(\beta_j | \text{others}) \propto \exp\{-BD_{\psi} (\boldsymbol{y}, \boldsymbol{h(X\beta)}) - \frac{d_j}{2}\beta_j^2\}$, $j = 1, \dots, L$. Note, it is possible to express $\pi(d_j | \text{others})$ in closed form, and due to the smoothness of $\pi(\beta_j | \text{others})$, we can implement the Newton-Rhapson algorithm to find its mode. Prior knowledge about $\beta_j$ is represented through $\beta_j | \hat{d}_j \stackrel{ind.}{\sim} N(0, 1/\hat{d}_j)$. Lemma 2 in Section A.5 of the Annexure shows the form of $\hat{d}_j$. The posterior distribution of lambda is $\pi(\lambda | others) \propto \lambda^{\alpha_1 - 1} \exp\{-(\alpha_2 - \sum_{j=1}^L log (d_j)) \lambda \}$. For the remainder of this section, we demonstrate how we utilize the Metropolis-Hastings algorithm within a component-wise Gibbs sampler to carry out the MCMC algorithm.  

\underline{MCMC sampling procedure}. We implement the MCMC algorithm to obtain samples from our target posterior. \cite{goh2018bayesian} implement a component-wise ICM algorithm within the MCMC algorithm to compute the MAP estimate of $\boldsymbol{\beta}$. For each iteration of the algorithm, the cutoff is defined as $\xi_j = \frac{2 \lambda}{\tau_0^2 / [(\boldsymbol{X}_[,j])^T \boldsymbol{X}_{[,j]}] + \hat{\beta}_j^2}$, $j = 1, \dots, L$, which corresponds to a $95\%$ credible interval when $\hat{\beta}_j = 0$. As a result of the hard thresholding, $\hat{\boldsymbol{\beta}}$ contains exact zeros. However, we found their approach induces strong sparsity into the estimator. Consequently, we propose an entirely Bayesian MCMC sampling approach, inducing sparsity into the estimator in the first stage of feature extraction.

\underline{Modified version of Metropolis-Hastings algorithm}. Due to the high dimensionality of the parameter space, we additionally implement a component-wise Gibbs sampler by iteratively generating $\beta_j$'s from their component-wise full conditionals. Since a closed form of the full conditional distribution of $\beta_j$ may not exist, we propose to use the Metropolis-Hastings algorithm within the Gibbs sampler, where the move from the current state $\beta_j^{(t)}$ to a new state $\beta_j^{(t+1)}$ is generated from the proposal density $p(\beta_j^{(t+1)} | \beta_j^{(t)})$. Our Metropolis-Hastings algorithm differs from \cite{goh2018bayesian}, because we don't implement the ICM algorithm. Let $\boldsymbol{\beta}^{(t)}_{(j)} = (\beta_1^{(t+1)}, \dots, \beta_{j-1}^{(t+1)}, \beta_j^{(t)}, \beta_{j+1}^{(t)}, \dots, \beta_L^{(t)})$ denote the coefficient vector at the current state and $\boldsymbol{\beta}^{(t+1)}_{(j)} = (\beta_1^{(t+1)}, \dots, \beta_{j-1}^{(t+1)}, \beta_j^{(t+1)}, \beta_{j+1}^{(t)}, \dots, \beta_L^{(t)})$ denote the coefficient vector at the next state. We define the proposal density to be Gaussian with mean $\beta_j^{(t)}$ and variance $\delta [\frac{\partial^2}{\partial \beta_j^2} BD_{\psi}(\boldsymbol{y}, \boldsymbol{h(X\beta)} ) |_{\boldsymbol{\beta} = \boldsymbol{\beta}^{(t)}_{(j, L)}} + \hat{d}_j ]^{-1}$, owing to the form of the full conditional of $\beta_j$. Here, $\delta > 0$ is a predetermined constant to control the acceptance rate. For all $1 \leq j \leq L$, the proposed moves from state to state are accepted with probability 
\begin{equation*}
    \alpha = \text{min} \left\{1, \frac{f_{\psi}\left(\boldsymbol{y} | \boldsymbol{\beta}^{(t+1)}_{(1, j)}\right)\pi_{GD}\left(\boldsymbol{\beta}^{(t+1)}_{(1, j)}\right) p\left(\beta_j^{(t)} | \beta_j^{(t+1)}\right)}{f_{\psi}\left(\boldsymbol{y} | \boldsymbol{\beta}^{(t)}_{(j, L)}\right) \pi_{GD}\left(\boldsymbol{\beta}^{(t)}_{(j, L)}\right) p\left(\beta_j^{(t+1)} | \beta_j^{(t)}\right)} \right\}.
\end{equation*}

\subsection{Two-Stage Feature Extraction}
\label{sec:twostage}

A major goal of our Bayesian feature extraction process is to identify active regions of the brain according to the magnitude of their electrical signal. We propose a two-stage feature extraction process to obtain final estimates and select active locations. Although the MCMC sampling approach doesn't produce exact 0 estimates, the inherent feature extraction nature of the GD prior shrinks coefficients of smaller magnitude, while preserving coefficients of larger magnitude. Sparsity is induced into the estimator through a false discovery rate (FDR) thresholding approach in stage one. Coefficients which were preserved through shrinkage will remain preserved, whereas those which were reduced toward 0 will be dropped. Since the covariates in the local models are locations, performing the first stage in this way accounts for the fact that the set of locations selected is not uniform over all time points. In the second stage, we use a k-means clustering approach on the area under the curve formed by the estimated coefficient vectors of locations obtained from stage one. 

\underline{First-stage feature extraction with FDR control}.
The first stage of the feature extraction process is a FDR-based thresholding approach involving Bayesian model averaging and MAP estimates \citep{morris2008bayesian,mohammed2021radio}. We consider $N$ MCMC samples of $\boldsymbol{\beta}_t = (\beta_{1t}, \dots, \beta_{Lt})^T$, $\{\boldsymbol{\beta}_t^{(s)} : s = 1, \dots, N\}$, for $t = 1, \dots, \tau$ from the GD posterior distribution. For each location $l = 1, \dots, L$, we can calculate posterior inclusion probabilities $p_{lt} = \frac{1}{N} \sum_{s=1}^N \mathbbm{1}_{(|\beta_{lt}^{(s)}| \leq c)}$ from the observed MCMC samples for some constant $c > 0$. This pooled statistic can be interpreted as the local FDR \citep{morris2008bayesian}, and $(1 - p_{lt})$ is the probability that location $l$ at time point $t$ is active. For coefficients $\beta_{lt}$ close to 0, $p_{lt}$ is large so that the probability of the coefficient being significant is small. On the other hand, $p_{lt}$ is small for coefficients of higher magnitude, indicating a high probability of the coefficient being significant. The major question is, what values of $p_{lt}$ are considered small, and what values are considered large? We choose a threshold $\psi_{\alpha}$ such that the overall average FDR is controlled at level $\alpha$. The coefficient $\beta_{lt}$ is said to be significant if $p_{lt} < \psi_{\alpha}$. By controlling the overall average FDR at level $\alpha$, we only allow for $100\alpha\%$ of the pairs in $\{(l,t) : p_{lt} < \psi_{\alpha}\}$ to be false-positive inclusions. The threshold $\psi_{\alpha}$ is computed by first sorting the posterior inclusion probabilities $p_{lt}$ over all locations and time points. We denote the sorted probabilities by $p_{(k)}$, for $k = 1, \dots, L\tau$, and let $u = \{K | \frac{1}{K} \sum_{k = 1}^K p_{(k)} \leq \alpha \}$. We define $\psi_{\alpha} = p_{(u)}$, and according to our approach at level $\alpha$, the set $\{(l, t) : p_{lt} < \psi_{\alpha}\}$ is considered active. 

\underline{Second-stage feature extraction with k-means clustering}. After the first stage, we have a new vector of coefficient estimates at each time point, $\hat{\boldsymbol{\beta}}^*_{.t}$, which contains exact zero values. Looking at all time points, we have a matrix of zero and nonzero coefficient estimates, say $\hat{\boldsymbol{\beta}}^*$. Specifically, $\hat{\beta}_{lt}^* = 0$ implies location $l$ at time point $t$ is inactive, and $\hat{\beta}_{lt}^* \neq 0$ implies location $l$ at time point $t$ is active. In the second stage of the feature extraction process, we are focusing on the individual locations, represented by rows of $\hat{\boldsymbol{\beta}}^*$, say $\hat{\boldsymbol{\beta}}^*_{l.}$. As a consequence of inducing sparsity into the estimator in the first stage, active locations will have distinctively more non-zero entries than inactive locations. To extract active locations, we start by creating a line plot where the x-axis represents each $\hat{\beta}^*_{lt}$ over time and the y-axis represents the magnitude of the estimates. From here, we compute the area of the geometric shapes created between each pair of points in the line plot. We refer to the area under the curve formed by $\hat{\boldsymbol{\beta}}^*_{l.}$ as the sum of the areas of the geometric shapes. The area under the curve formed by active locations will be notably larger than that of inactive locations. Active locations will demonstrate longer streaks of successive nonzero estimates over time. Conversely, inactive locations will show a flatter pattern, emerging from several streaks of 0 over time. Locations with substantially large magnitude coefficients for only a short period of time will also have larger areas under the curve. Finally, we apply k-means clustering to the areas under curve formed by location vectors. Although k-means is sensitive to outliers, particularly when $k = 2$, there are no outliers after the first stage. We conclude locations in the cluster with the larger centroid are active. Since the first stage initiates separation between active and inactive locations, k-means with $k = 2$ clusters is more likely to converge to centroids that accurately represent the centers of the two clusters.

\subsection{Subject-Level Prediction}
\label{sec:pred}

Considering we construct models locally at time points, each local model is able to produce its own predicted binary response. Having all $\tau$ models predict the same binary response is extremely improbable. At each time point, the $i$th subject is assigned a prediction probability, $\hat{p}_{it} = h(\boldsymbol{x}_{it}^T \hat{\boldsymbol{\beta}}_t)$, where $\hat{\boldsymbol{\beta}}_t$ is the estimated coefficient vector at time point $t$, and represents the likeliness of the $i$th subject classified as a 1 over a 0. Specifically, $\hat{p}_{it}$ close to 1 is strong indication to classify the subject as $\hat{y}_{it} = 1$ at time point $t$ and $\hat{p}_{it}$ close to 0 is strong indication to classify the subject as $\hat{y}_{it} = 0$ at time point $t$. 

Final predictions for each subject in such scenarios have been made using the length of longest run of the local predictions $\hat{y}_{it}$ \citep{mohammed2019bayesian}. However, by assigning subject $i$ as 1 with probability $\hat{p}_{it}$ and 0 with probability $1 - \hat{p}_{it}$, additional randomness is incorporated into the prediction process, hampering evaluations based on longest runs. For example, if $\hat{p}_{it} = 0.8$, then there is still $0.2$ probability of subject $i$ being classified as a 0. The effectiveness of the length of the longest run is significantly compromised by misclassification in local predictions. More so, clear differences in electrical signal, between the control and alcohol group, exist within a smaller interval of time than the final stretch of time where the two groups show extremely similar behavior (see Figure \ref{fig:erp-plots-alc-ctrl}). 

Instead, we construct local weights, $w_{it}$, with the goal of allocating higher priority to local models with more certainty of classifying the subject as a 0 or 1. Local models corresponding to values of $\hat{p}_{it}$ closer to either extreme are granted increased emphasis over those corresponding to $\hat{p}_{it}$ in the vicinity of 0.5. Local models yielding $\hat{p}_{it}$ near 0.5 have a high degree of uncertainty resulting from the considerable amount of noise debilitating the electrical signals. These time points are practically negligible and should not play a major role in the prediction process. For all $t$, the weights for subject $i$ are defined as $w_{it} = \frac{(\hat{p}_{it} - 0.5)^2}{\sum_{t=1}^{\tau} (\hat{p}_{it} - 0.5)^2}$.
Owing to the heterogeneity in the temporal cross-section of the data and accounting for the subject-to-subject variability in the data, the final predicted response for subject $i$ is $\hat{y}_i = 1$ if $\sum_{t=1}^{\tau}w_{it}\hat{p}_{it} > 0.5$; and $\hat{y}_i = 0$ otherwise. In addition to decreasing the leverage of the local models at which the signal is insignificant, our approach accounts for the fact that there are time points where the selected locations show minimal electrical activity.

\section{Numerical Studies}
\label{sec:numstud}

In this section, we investigate the performance of our GD approach in estimation, feature extraction, and subject-level prediction. Results of our GD method are compared to existing shrinkage and selection operators employed in PL regression: LASSO, adaptive LASSO (a-LASSO), elastic net (E-Net), mini-max concave penalty (MCP), smoothly clipped absolute deviation (SCAD). We include details surrounding the run time of the GD model in Section \ref{sec:conclusion}.

For all analyses reported in this paper, the model was fitted using the \texttt{R} package \texttt{rstan} \citep{stan2023rstan}; the Gibbs sampler derivation is provided in Section \ref{sec:gdpriorspec} for methodological context. The \texttt{rstan} implementation employs Hamiltonian Monte Carlo (HMC) \citep{betancourt2015hamiltonian, neal2011mcmc}. The MH-within-Gibbs sampler presented in Section \ref{sec:gdpriorspec} and the Stan HMC implementation target the same GD posterior distribution; they differ only in the mechanism used to generate proposals. HMC uses gradient information from the log-posterior to construct efficient proposals that traverse the posterior geometry in fewer steps than a random-walk MH scheme. The smoothness and differentiability of the GD posterior, a property that follows directly from the continuous and differentiable nature of the GD penalty (see Section~A.2.2), makes it particularly amenable to gradient-based sampling. The \texttt{rstan} implementation was adopted for all reported results on account of its superior computational efficiency, particularly for larger simulation settings and the EEG case study.

\underline{Prior elicitation}. A generalized least squares estimator of $\boldsymbol{\beta}$ is used as the initial value in the simulations and case study. We use $\tau_0 = 10^{-5}$, because it is sufficiently small and works well in the numerical studies. In consideration of the low signal-to-noise ratio in the data, the constant used in the FDR approach to calculate the posterior inclusion probabilities is chosen to be $c = 10^{-3}$. A sensitivity analysis across a small range of $c$ values is reported in Supplementary Section~\ref{sec:simsettings}. The FDR approach will not recognize any of the coefficients as significant for too large of a value of $c$, which results in the first stage of feature extraction producing an overly sparse coefficient matrix. We choose the shape and rate hyperparameters of the gamma distribution on $\lambda$ to be $\alpha_1 = 0.1$ and $\alpha_2 = 0.2$, respectively. In other words, the prior mean is 0.5 and the prior variance is 2.5. We have prior belief that the model tends to perform better when $\lambda$ is between 0 and 1, which is why we chose the prior mean to be 0.5. However, choosing a fairly large prior variance allows for more flexibility in the way we present our prior knowledge and allows for a wider exploration of the parameter space. 

\subsection{Simulation Study}
\label{sec:simstudy}

In order to demonstrate the effectiveness and validity of our GD approach, we fit our model to multiple simulated data sets with different settings. The simulated data sets are sparse, high dimensional and aim to resemble the data structure used in the case study. In longitudinal simulation studies, only a single time point is considered, so one can simply generate the data, $\boldsymbol{X}_t$, which can be used along with the true coefficient vector, $\boldsymbol{\beta}_t$, and known inverse link function, $h(\cdot)$, to generate the response vector $\boldsymbol{y}$. A major challenge in simulating data for our framework is that $\boldsymbol{X}_t$ and $\boldsymbol{\beta}_t$ vary over time, while the response vector remains the same, i.e., $\boldsymbol{y}_t = \boldsymbol{y}$, for all $t = 1, \dots, \tau$. See Section A.3 of the Annexure for details on how we simulate our data and coefficients. 

Since we want to assess the selection of active locations, we report the proportion of correctly and incorrectly selected locations using true positive rate (TPR) and false positive rate (FPR). We report the root mean square error (rMSE) of estimation after the first stage to illustrate the models ability to accurately estimate coefficients, although, sparsity in the final estimator causes estimation error rates to inflate. With $p = L \times \tau$, we define $rMSE = \sqrt{|| \boldsymbol{\beta} - \hat{\boldsymbol{\beta}} ||^2/p}$. We present the receiver operating characteristics (ROC) curves and the area under the ROC curve (AUC) values as well. Additionally, we report three measures of prediction error rates: TPR, FPR, and prediction error (PE).

Three different values of $L$ (25, 60, 75) and three different values of $\tau$ (100, 175, 250) were considered, resulting in nine total simulation settings. In addition, we consider three different $\sigma_{\text{noise}}$ values, 1, 1.5, and 2, where $\sigma_{\text{noise}}$ is the standard deviation of the normal distribution used to generate random noise. Let $\text{SNR} = (\max|\beta_j|)/\sigma_{\text{noise}}$ denote the signal-to-noise ratio, and $\max|\beta_j|$ represents the largest absolute regression coefficient. A larger value of $\sigma_{\text{noise}}$ decreases the SNR, whereas a smaller value increases it. Consequently, simulation results depend on the chosen SNR, which is controlled through $\sigma_{\text{noise}}$. In the analysis reported below, we consider $\text{SNR} = 1/1.5$. See Section A.3 of the Annexure for more details on the different simulation settings. Each setting was replicated 100 times. All of the rates in the tables below are averages, along with standard deviations, over the 100 replications. Table \ref{tab:simres1} contains results from estimation and feature extraction, and Table \ref{tab:simres2} contains results from subject-level prediction. 

\begin{table}[!t]
\centering
\caption{Estimation and feature extraction - Simulation results.} 
\begin{tabular}{ccccc}
\hline 
$L$ & $\tau$ & rMSE & corrInd & incorrInd \\
\hline
25 & 100 & 0.542 (0.009) & 0.860 (0.165) & 0.110 (0.122) \\
& 175 & 0.541 (0.006) & 0.946 (0.110) & 0.057 (0.079) \\
& 250 & 0.541 (0.005) & 0.952 (0.090) & 0.038 (0.052) \\
\hline
60 & 100 & 0.574 (0.037) & 0.676 (0.165) & 0.254 (0.101) \\
& 175 & 0.575 (0.030) & 0.778 (0.158) & 0.271 (0.100) \\
& 250 & 0.571 (0.024) & 0.812 (0.125) & 0.234 (0.100) \\
\hline
75 & 100 & 0.966 (0.142) & 0.611 (0.138) & 0.334 (0.083) \\
& 175 & 0.971 (0.092) & 0.691 (0.139) & 0.322 (0.098) \\
& 250 & 0.979 (0.072) & 0.735 (0.131) & 0.342 (0.104) \\
\hline
\end{tabular}
\label{tab:simres1}
\end{table} 

As the number of locations, $L$, increases, the accuracy of correctly selected indices decreases, while the proportion of incorrectly selected locations rises. This trend makes sense since the number of predetermined active locations increases with $L$. Conversely, as the number of time points, $\tau$, increases, the proportion of correctly selected locations increases, and the proportion of incorrectly selected locations decreases. Since the length of the location vectors is $\tau$, for a large $\tau$ , the frequency of 1's will be much greater than the frequency of 0's in predetermined active locations, making it easier for the second stage of feature extraction to differentiate between active and inactive locations. Estimation accuracy also decreases with larger $L$, which is not surprising as the number of coefficients in each local model is $L$. Taking into account that the MCMC sampling algorithm is conducted at each local model, increasing $L$ makes it more difficult for the algorithm to converge. There is no clear pattern that arises in the rMSE over $\tau$, which is understandable considering the dimension of the local coefficient vectors is $L$ and irrelevant of $\tau$.  

The general trends observed in the results for prediction follow similar to those for estimation and feature extraction. As $L$ increases, prediction accuracy tends to decrease, and as $\tau$ increases, prediction accuracy increases. If the coefficient estimates are less accurate for large $L$, then there will also be more variance in the local prediction probabilities. By conducting the weighted prediction process as described in Section \ref{sec:pred}, the prediction accuracy decreases when there is more variability in local prediction probabilities. In the same way, we anticipate any additional uncertainty in the prediction process to decrease when $\tau$ is larger. Overall, the model appears to perform the best in estimation and feature extraction for large values of $\tau$ and small values of $L$. For prediction, the model appears to perform the best for large values of $\tau$, mostly irrelevant of $L$. In the real data, there are 57 locations and 256 time points. Given the findings in the simulation setting with $L = 60$ and $\tau = 250$, we expect the model to perform fairly well in the EEG case study. We understand that the high subject-to-subject variability and low signal-to-noise ratio in the real data will likely reduce overall performance, but we anticipate results comparable to previous methods. Tables A.3-A.6 in Section A.4 of the Annexure contain results for different values of SNR.

\begin{table}[!t]
\centering
\caption{Prediction rates and AUC - Simulation results.}
\begin{tabular}{cccccc}
\hline
$L$ & $\tau$ & TPR & FPR & PE & AUC \\
\hline
25 & 100 & 0.908 (0.076) & 0.105 (0.098) & 0.096 (0.063) & 0.963 (0.041) \\
& 175 & 0.953 (0.052) & 0.037 (0.065) & 0.043 (0.038) & 0.993 (0.011) \\
& 250 & 0.979 (0.040) & 0.025 (0.044) & 0.022 (0.029) & 0.998 (0.006) \\
\hline
60 & 100 & 0.879 (0.092) & 0.123 (0.116) & 0.123 (0.079) & 0.951 (0.059) \\
& 175 & 0.945 (0.055) & 0.062 (0.080) & 0.057 (0.049) & 0.984 (0.032) \\
& 250 & 0.968 (0.043) & 0.036 (0.055) & 0.033 (0.036) & 0.993 (0.018) \\
\hline
75 & 100 & 0.884 (0.083) & 0.113 (0.094) & 0.114 (0.066) & 0.955 (0.043) \\
& 175 & 0.935 (0.058) & 0.055 (0.071) & 0.061 (0.048) & 0.986 (0.023) \\
& 250 & 0.979 (0.036) & 0.026 (0.053) & 0.023 (0.032) & 0.996 (0.012) \\
\hline
\end{tabular}
\label{tab:simres2}
\end{table} 

\subsection{EEG Case Study}
\label{sec:eegstudy}

In this section, we provide results obtained from applying our GD method to a multi-subject EEG case study. Our goal is to classify subjects according to their EEG measurements to predict genetic predisposition to alcoholism. Details surrounding run time of the model are in Section \ref{sec:conclusion}. Due to the low signal-to-noise ratio and several other complex characteristics of the data, high prediction accuracy is hard to achieve. In light of this, we report ROC curves and AUC values on top of the prediction error rates which were used in the simulation studies. Results from the competing methods (LASSO, a-LASSO, E-Net, MCP, SCAD) are obtained by vectorizing the model coefficients into a $L\tau$-dimensional vector and matricizing the three-dimensional array of data into an $n \times L\tau$-dimensional matrix. On the other hand, our GD method constructs local models which actually tackles the high dimensional problem and allows us to recognize spatial and temporal patterns observed in the data. Table \ref{tab:eegres1} provides the results and Figure \ref{fig:roccurves-sameplot} contains ROC curves for each method. We do not include comparisons with the spike-and-slab approach in \cite{mohammed2019bayesian}, because a function from an R package that was used to fit the model is no longer supported. For context, \cite{mohammed2019bayesian} reported a prediction accuracy of 70.03\% (equivalently, a prediction error of 29.97\%) with TPR $= 0.7374$ and FPR $= 0.3600$ on the same EEG dataset using fivefold cross-validation with the Wcut thresholding method. Our GD method achieves an improved prediction error of 26.23\% (prediction accuracy of 73.77\%), with TPR $= 0.7143$ and FPR $= 0.2222$ using LOOCV. The two methods are broadly comparable in overall prediction accuracy, with our GD method achieving a lower prediction error and a notably lower FPR, suggesting that the GD prior with FDR-controlled screening induces more controlled sparsity in the selected locations.

\begin{table}[!t]
\centering
\caption{Prediction error rates and AUC - EEG results.}
\label{tab:eegres1}
\begin{tabular}{lcccc}
\hline
Method & TPR & FPR & PE & AUC \\
\hline
GD & 0.7143 & 0.2222 & 0.2623 & 0.7856 \\
LASSO & 0.8571 & 0.3778 & 0.2295 & 0.7616 \\
a-LASSO & 0.7662 & 0.2222 & 0.2295 & 0.7968 \\
E-Net & 0.8052 & 0.4000 & 0.2705 & 0.7723 \\
MCP & 0.8182 & 0.4444 & 0.2787 & 0.7752 \\
SCAD & 0.8571 & 0.3778 & 0.2295 & 0.7651 \\
\hline
\end{tabular}
\end{table}

\begin{figure}[!t]
    \centering
    \includegraphics[scale=0.6, trim=0.25cm 0.25cm 0.5cm 0.25cm, clip]{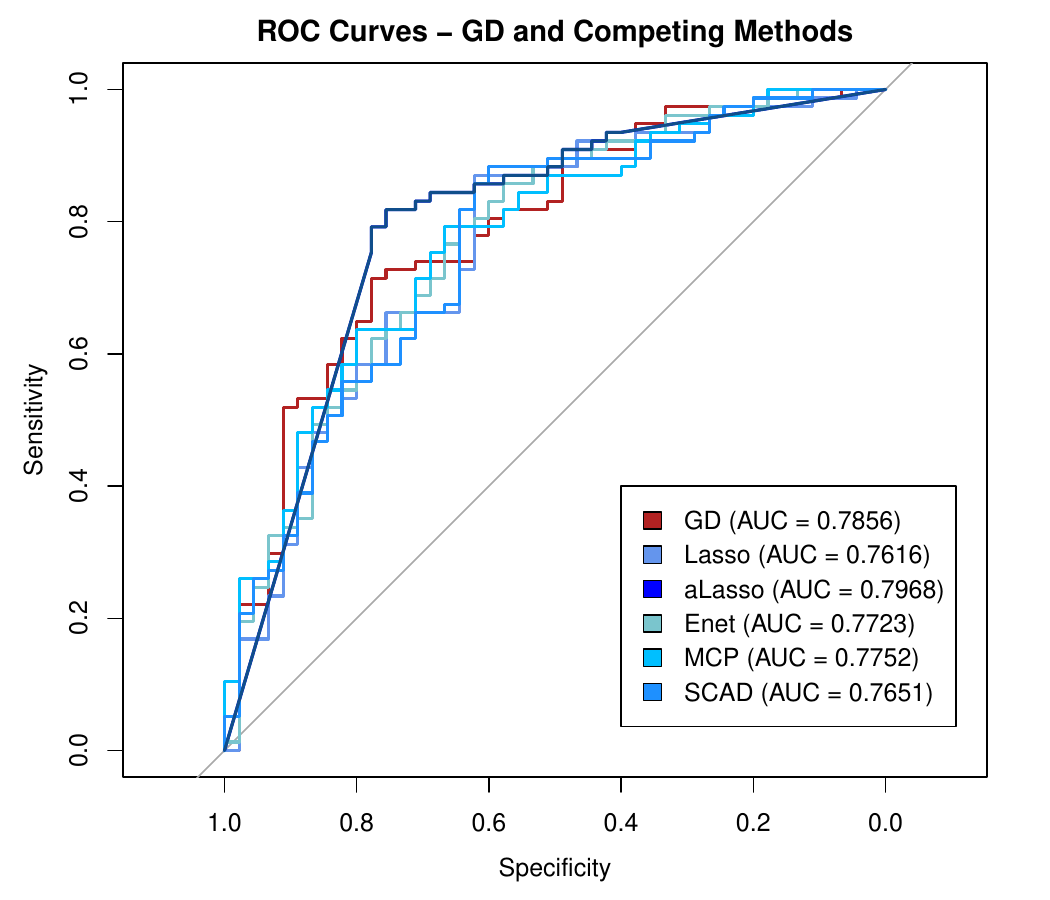}
    \caption{ROC curves - EEG results.}
    \label{fig:roccurves-sameplot}
\end{figure}

Our final subject-level predictions were based on the coefficient estimates obtained after the first-stage feature selection, $\hat{\boldsymbol{\beta}}^*$. Although all of the competing methods have high TPR, the GD method has the lowest FPR (tied with a-LASSO). Also, the GD method has the second highest AUC behind a-LASSO. Since the a-LASSO model was fit using ridge weights which were found using cross-validation, it makes sense that it performed the best here in terms of prediction rates. Even with that, the results from the GD method are comparable to a-LASSO and outperforms the other competing methods. Over and above that, our GD method determines which locations of the brain are active, identifying patterns of functional connectivity associated with cognitive or behavioral states. The competing methods can only determine whether an individual $(l, t)$ pair is significant, i.e., whether a coefficient is zero or nonzero. They are unable to determine active locations as our framework allows.

We implemented leave-one-out cross-validation (LOOCV). With $n = 122$ subjects in the data, LOOCV can be thought of as a $k$-fold cross-validation where $k = 122$. Using this approach, an average of approximately $3.87$ locations were selected over the 122 folds, which accounts for approximately $6.79\%$ of the 57 locations. Similar to what is seen in Figure 2 of \cite{hu2015local} and the plots in Section \ref{sec:eeg}, the bulk of the activity is detected at the outer locations of the brain. We found that majority of the folds resulted in the model selecting the same 3 electrodes, CP5, P7, and P07, as active. It is worth noting these electrodes are located in the parietal lobe of the brain, which plays a crucial role in a wide range of sensory and cognitive processes. Note that, to empirically validate the choice of number of clusters in the second-stage feature extraction with k-means clustering, we considered cluster sizes $k = 2$ and $k = 3$ on the full EEG data set and fit the local models. We found that the average silhouette widths were 0.21 and 0.02 for $k = 2$ and $k = 3$, respectively. Although 0.21 ($k=2$) reflects modest separation it is substantially higher than 0.02 ($k = 3$), supporting $k = 2$ as the empirically appropriate choice.

The parietal lobe is responsible for processes such as sensory integration and spatial awareness, and has been referred to as the association region of the brain \cite{ackerman1992discovering}. It has also been shown to be essential for visual tracking \citep{battelli2009role}. Aside from the parietal lobe, other electrodes selected multiple times were located in the frontal lobe, shown to have fast visual responses, thought to reflect mostly low-level visual processing, and delayed responses that correlate with perceptual reports \citep{libedinsky2011role}. Table \ref{tab:eeglocsel} provides the names of electrodes along with their associated frequency of selection. Only $18$ out of $57$ of the locations were selected at least once. To demonstrate where these electrodes are located relative to different regions of the brain, we included a scalp map in Figure \ref{fig:scalpmap-selLoc} based on the frequency of selection for all locations over the $122$ folds. We used a square root transformation on the data because of the large range of frequencies from 0 to 122. The darker red seen on the bottom left demonstrates the activity in the parietal lobe, and the darker red on the upper left and top of the plot demonstrate the activity observed in different cortical areas of the frontal lobe. See Figure A.2 in Section A.4 of the Annexure for the exact spatial locations of the electrodes. 
\begin{table}[!t]
    \centering
    \caption{Electrodes frequency of selection over the 122 folds in LOOCV.}
    \begin{tabular}{cccccccccc}
        \hline
Electrode & FPZ & AFz & F7 & F5 & F3 & F8 & FT7 & FC3 & FCZ \\
Frequency & 39 & 2 & 7 & 6 & 2 & 1 & 20 & 1 & 3 \\
\hline
Electrode & CP5 & CP1 & CP4 & P7 & P5 & P4 & PO7 & POz & O2 \\
Frequency & 121 & 1 & 1 & 122 & 15 & 3 & 122 & 4 & 2 \\
\hline
    \end{tabular}
    \label{tab:eeglocsel}
\end{table}

\begin{figure}[!t]
    \centering
    \includegraphics[scale=0.6, trim=0.5cm 3.15cm 0.25cm 3cm, clip]{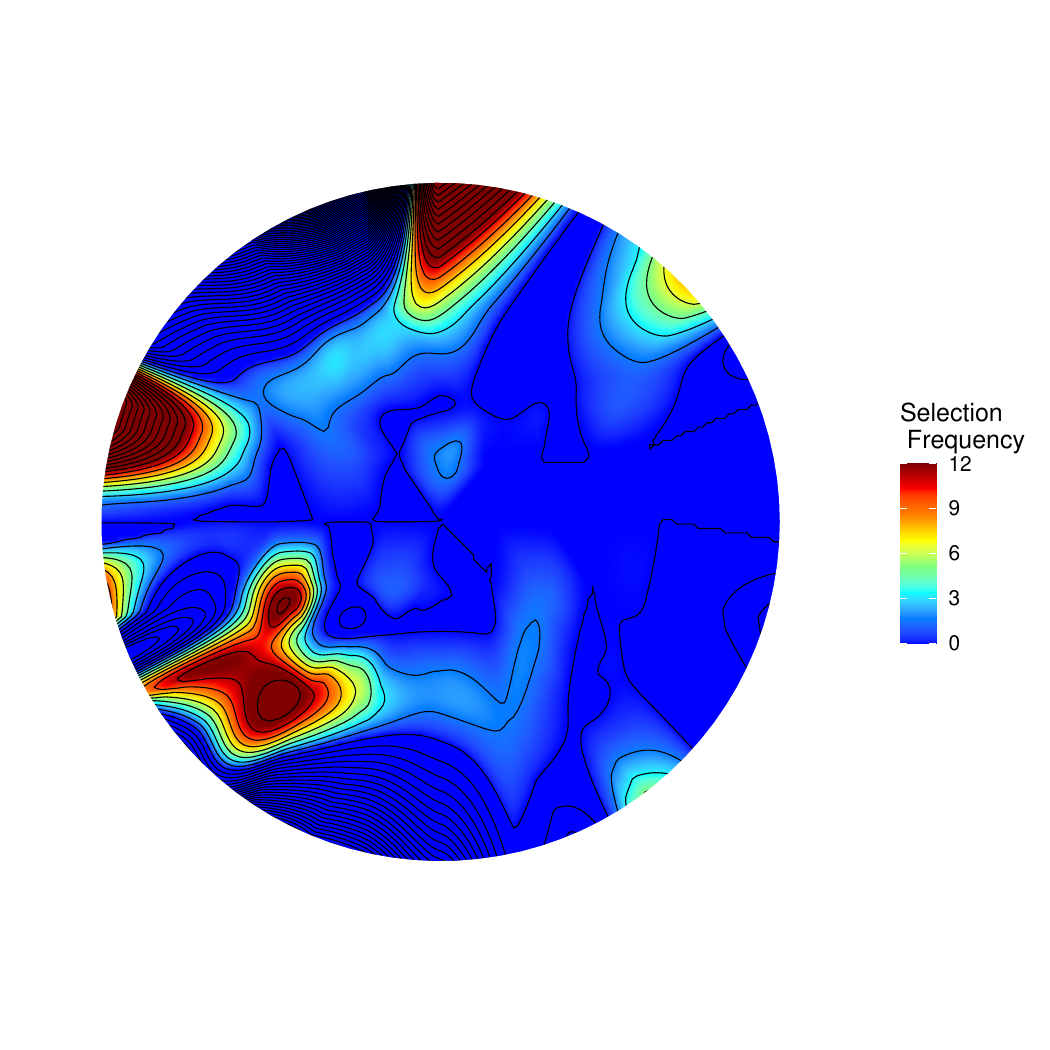}
    \caption{Scalp map showing the frequency (square root) of selection for the 57 locations.}
    \label{fig:scalpmap-selLoc}
\end{figure}

\section{Conclusion}
\label{sec:conclusion}

We introduce a novel Bayesian approach, utilizing a continuous-and-differentiable approximation of the $\ell_0$-norm penalty, to address the challenge of simultaneous feature extraction and estimation in handling sparse, high dimensional spatio-temporal data. To reduce the dimension of the problem, we employ a local modeling approach, offering the advantage of parallelizing local models. The Gaussian and gamma mixture representation allows for estimation via traditional Bayesian methods such as the MCMC sampling algorithm to extract information regarding the target posterior distribution. Although our method is being applied to EEG measurements to examine the effects of chronic alcohol exposure on specific brain regions, its adaptability extends to other domains with a similar spatio-temporal structure and dependencies across multiple dimensions.

Over and above the frequentist PL methods, Bayesian methods are able to quantify uncertainty in the PL estimator. We assign a gamma prior to the regularization parameter as a computationally efficient alternative to frequentist grid search, on top of other potential modeling benefits to explore. BD measures as loss functions allows our method to handle various likelihood functions. Our two-stage feature extraction process induces sparsity into the estimator and determines active locations. FDR thresholding induces sparsity in the first stage, while k-means clustering on the areas under the curve formed by estimated coefficient vectors selects active locations, avoiding a fixed threshold. While our GD approach doesn't explicitly incorporate spatial-temporal structure, it demonstrates computational efficiency. 

\underline{Spatio-temporal characteristics of the data and compute time}. Bypassing the spatio-temporal structure of the data in our local modeling approach is a strategic choice, allowing us to avoid the need of extensive computational power. We still acknowledge the spatio-temporal behavior of the data. In the second stage, we employ k-means clustering to group locations exhibiting similar behavior over time. We account for the varying signal-to-noise ratio at different time points, giving less weight to those with an extremely low signal-to-noise ratio during our weighted prediction process. Despite alternative approaches explicitly incorporating spatial-temporal structure, our method yields computational efficiency and comparable results. For example, fitting a single local model on the EEG data with $1000$ MCMC iterations, $n = 122$ subjects, and $L = 57$ locations on a computer with 32 GB of memory, 10 cores, and a Apple M1 Max chip, yields a run time of approximately $35.23$ seconds, on average, with a standard deviation and range of approximately $11.54$ seconds and $57.44$ seconds, respectively. Fitting all $\tau = 256$ local models individually yields a run time of approximately $150.33$ minutes. When we run 10 local models at a time, in parallel, on the 10 cores, the run time to fit all $\tau = 256$ local models is reduced to approximately $36.19$ minutes. To further characterize scaling with $n$ and $L$, we timed a single local model ($1{,}000$ MCMC iterations, $300$ warmup, one chain) for $(n, L) \in \{(100, 25),\,(100, 60),\,(200, 25)\}$ on an Intel Xeon E5-2660v3 node with 256 GB RAM, yielding mean sampling times of $13.2$, $26.5$, and $18.0$ seconds, respectively. Run time approximately doubles when $L$ increases from $25$ to $60$ with $n$ fixed, while increasing $n$ from $100$ to $200$ with $L$ fixed produces a more modest increase, indicating that run time is more sensitive to the number of locations than the number of subjects.

Although we avoided fixing a threshold, k-means clustering with two clusters comes with its own disadvantages. The weighted prediction process is based on a fixed threshold, leading to an arbitrary separation between subjects classified as alcoholic versus otherwise. Extending the framework to another real-data application (e.g., fMRI data or spatio-temporal climate measurements) which share the same three-dimensional array structure required by the method, remains an important and natural direction for future work. A major focus moving forward will be to improve the local modeling approach by incorporating the temporal and spatial structures explicitly through the prior. Beyond that, we also plan to extend the GD prior to three-dimensional array models, exploring new techniques to improve our feature extraction process, enhancing the prediction process, and assessing the model under flexible link functions.

{\parindent0pt
\newpage

\newpage
}


\section*{Acknowledgments}
SM was partially supported by Institutional Research Funds from Boston University and the Framingham Heart Study Brain Aging Program supported by U19-AG068753 from the National Institute of Health.
\vskip0.3cm

\section*{Code Availability Statement.} The code to implement the model using Stan is provided as an \texttt{R} package at
\url{https://github.com/garrett-frady/gdprior}.

\section*{Conflict of interest}
The authors do not have any financial or non-financial conflict of interest to declare for the research work included in this article.
\bibliographystyle{sa}
\bibliography{refs}

@article{goh2018bayesian,
  title     = {{B}ayesian {MAP} estimation using {G}aussian and diffused-gamma prior},
  author    = {Goh, Gyuhyeong and Dey, Dipak K},
  journal   = {Canadian Journal of Statistics},
  volume    = {46(3)},
  pages     = {399--415},
  year      = {2018},
  publisher = {Wiley Online Library}
}

@article{mohammed2019bayesian,
  title     = {{B}ayesian variable selection using spike-and-slab priors with application
               to high dimensional electroencephalography data by local modelling},
  author    = {Mohammed, Shariq and Dey, Dipak K and Zhang, Yuping},
  journal   = {Journal of the Royal Statistical Society: Series C (Applied Statistics)},
  volume    = {68(5)},
  pages     = {1305--1326},
  year      = {2019},
  publisher = {Wiley Online Library}
}

@article{banerjee2005clustering,
  title     = {Clustering with {B}regman divergences},
  author    = {Banerjee, Arindam and Merugu, Srujana and Dhillon, Inderjit S
               and Ghosh, Joydeep},
  journal   = {Journal of Machine Learning Research},
  volume    = {6(58)},
  pages     = {1705--1749},
  year      = {2005}
}

@article{tam2019human,
  author  = {Tam, Wing-kin and Wu, Tong and Zhao, Qi and Keefer, Edward
             and Yang, Zhi},
  year    = {2019},
  pages   = {12},
  title   = {Human motor decoding from neural signals: a review},
  volume  = {1},
  journal = {{BMC} Biomedical Engineering},
  doi     = {10.1186/s42490-019-0022-z}
}

@article{caffo2010twostage,
  title   = {Two-stage decompositions for the analysis of functional connectivity
             for {fMRI} with application to {A}lzheimer's disease risk},
  journal = {NeuroImage},
  volume    = {51(3)},
  pages   = {1140--1149},
  year    = {2010},
  doi     = {10.1016/j.neuroimage.2010.02.081},
  author  = {Caffo, Brian S and Crainiceanu, Ciprian M and Verduzco, Guillermo
             and Joel, Suresh and Mostofsky, Stewart H and Bassett, Susan Spear
             and Pekar, James J}
}

@article{zhou2013tensor,
  author  = {Zhou, Hua and Li, Lexin and Zhu, Hongtu},
  year    = {2013},
  pages   = {540--552},
  title   = {Tensor regression with applications in neuroimaging data analysis},
  volume  = {108},
  journal = {Journal of the American Statistical Association},
  doi     = {10.1080/01621459.2013.776499}
}

@article{zhou2014regularized,
  author  = {Zhou, Hua and Li, Lexin},
  year    = {2014},
  pages   = {463--483},
  title   = {Regularized matrix regression},
  volume  = {76},
  journal = {Journal of the Royal Statistical Society: Series B
             (Statistical Methodology)},
  doi     = {10.1111/rssb.12031}
}

@article{tian2012twoway,
  author  = {Tian, Tian Siva and Huang, Jianhua Z and Shen, Haipeng
             and Li, Zhimin},
  year    = {2012},
  title   = {A two-way regularization method for {MEG} source reconstruction},
  volume    = {6(3)},
  pages   = {1021--1046},
  journal = {Annals of Applied Statistics},
  doi     = {10.1214/11-AOAS531}
}

@article{hu2015local,
  author  = {Hu, Yue and Allen, Genevera I},
  year    = {2015},
  pages   = {905--917},
  title   = {Local-aggregate modeling for big-data via distributed optimization:
             applications to neuroimaging},
  volume    = {71(4)},
  journal = {Biometrics},
  doi     = {10.1111/biom.12355}
}

@inproceedings{lukic2002spatially,
  author    = {Lukic, A S and Wernick, M N and Hansen, L K and Anderson, J
               and Strother, S C},
  booktitle = {Proceedings {IEEE} International Symposium on Biomedical Imaging},
  title     = {A spatially robust {ICA} algorithm for multiple {fMRI} data sets},
  year      = {2002},
  pages     = {839--842},
  doi       = {10.1109/ISBI.2002.1029390}
}

@article{calhoun2001method,
  author  = {Calhoun, Vince D and Adali, T{\"u}lay and Pearlson, Godfrey D
             and Pekar, J J},
  year    = {2001},
  pages   = {140--151},
  title   = {A method for making group inferences from functional {MRI} data
             using independent component analysis},
  volume    = {14(3)},
  journal = {Human Brain Mapping},
  doi     = {10.1002/hbm.1048}
}

@article{svensen2002ica,
  author  = {Svens{\'e}n, Markus and Kruggel, Frithjof and Benali, Habib},
  year    = {2002},
  pages   = {551--563},
  title   = {{ICA} of {fMRI} group study data},
  volume  = {16},
  journal = {NeuroImage},
  doi     = {10.1006/nimg.2002.1122}
}

@article{beckman2005tensorial,
  title   = {Tensorial extensions of independent component analysis for
             multisubject {FMRI} analysis},
  journal = {NeuroImage},
  volume    = {25(1)},
  pages   = {294--311},
  year    = {2005},
  doi     = {10.1016/j.neuroimage.2004.10.043},
  author  = {Beckmann, C F and Smith, S M}
}

@article{wanmei2009distributed,
  title   = {A distributed spatio-temporal {EEG}/{MEG} inverse solver},
  journal = {NeuroImage},
  volume    = {44(3)},
  pages   = {932--946},
  year    = {2009},
  doi     = {10.1016/j.neuroimage.2008.05.063},
  author  = {Ou, Wanmei and H{\"a}m{\"a}l{\"a}inen, Matti S and Golland, Polina}
}

@article{huang2009analysis,
  author  = {Huang, Jianhua and Shen, Haipeng and Buja, Andreas},
  year    = {2009},
  pages   = {1609--1620},
  title   = {The analysis of two-way functional data using two-way regularized
             singular value decompositions},
  volume  = {104},
  journal = {Journal of the American Statistical Association},
  doi     = {10.1198/jasa.2009.tm08024}
}

@article{lee2010biclustering,
  title     = {Biclustering via sparse singular value decomposition},
  author    = {Lee, Mihee and Shen, Haipeng and Huang, Jianhua Z and Marron, J S},
  journal   = {Biometrics},
  volume    = {66(4)},
  pages     = {1087--1095},
  year      = {2010},
  publisher = {Wiley Online Library}
}

@article{witten2009penalized,
  title     = {A penalized matrix decomposition, with applications to sparse
               principal components and canonical correlation analysis},
  author    = {Witten, Daniela M and Tibshirani, Robert and Hastie, Trevor},
  journal   = {Biostatistics},
  volume    = {10(3)},
  pages     = {515--534},
  year      = {2009},
  publisher = {Oxford University Press}
}

@article{tibshirani1996regression,
  title     = {Regression shrinkage and selection via the {L}asso},
  author    = {Tibshirani, Robert},
  journal   = {Journal of the Royal Statistical Society: Series B
               (Methodological)},
  volume    = {58(1)},
  pages     = {267--288},
  year      = {1996},
  publisher = {Wiley Online Library}
}

@article{fan2001variable,
  title     = {Variable selection via nonconcave penalized likelihood and
               its oracle properties},
  author    = {Fan, Jianqing and Li, Runze},
  journal   = {Journal of the American Statistical Association},
  volume    = {96(456)},
  pages     = {1348--1360},
  year      = {2001},
  publisher = {Taylor \& Francis}
}

@article{zou2006adaptive,
  title     = {The adaptive {L}asso and its oracle properties},
  author    = {Zou, Hui},
  journal   = {Journal of the American Statistical Association},
  volume    = {101(476)},
  pages     = {1418--1429},
  year      = {2006},
  publisher = {Taylor \& Francis}
}

@article{akaike1974new,
  title     = {A new look at the statistical model identification},
  author    = {Akaike, Hirotugu},
  journal   = {{IEEE} Transactions on Automatic Control},
  volume    = {19(6)},
  pages     = {716--723},
  year      = {1974},
  publisher = {IEEE}
}

@article{schwarz1978estimating,
  title     = {Estimating the dimension of a model},
  author    = {Schwarz, Gideon},
  journal   = {The Annals of Statistics},
  pages     = {461--464},
  volume    = {6(2)},
  year      = {1978},
  publisher = {JSTOR}
}

@article{zou2005regularization,
  author  = {Zou, Hui and Hastie, Trevor},
  title   = {Regularization and variable selection via the elastic net},
  journal = {Journal of the Royal Statistical Society: Series B
             (Statistical Methodology)},
  volume    = {67(2)},
  pages   = {301--320},
  doi     = {10.1111/j.1467-9868.2005.00503.x},
  year    = {2005}
}

@article{kim2008smoothly,
  title     = {Smoothly clipped absolute deviation on high dimensions},
  author    = {Kim, Yongdai and Choi, Hosik and Oh, Hee-Seok},
  journal   = {Journal of the American Statistical Association},
  volume    = {103(484)},
  pages     = {1665--1673},
  year      = {2008},
  publisher = {Taylor \& Francis}
}

@article{dicker2013variable,
  title     = {Variable selection and estimation with the seamless-{$L_0$}
               penalty},
  author    = {Dicker, Lee and Huang, Baosheng and Lin, Xihong},
  journal   = {Statistica Sinica},
  pages     = {929--962},
  volume    = {23},
  year      = {2013},
  publisher = {JSTOR}
}

@article{zhang2010nearly,
  title     = {Nearly unbiased variable selection under minimax concave penalty},
  author    = {Zhang, Cun-Hui},
  journal   = {The Annals of Statistics},
  volume    = {38(2)},
  pages     = {894--942},
  year      = {2010},
  publisher = {Institute of Mathematical Statistics}
}

@article{casella2010penalized,
  title     = {Penalized regression, standard errors, and {B}ayesian {L}assos},
  author    = {Casella, George and Ghosh, Malay and Gill, Jeff and Kyung, Minjung},
  journal   = {Bayesian Analysis},
  volume    = {5(2)},
  pages     = {369--411},
  year      = {2010},
  publisher = {International Society for Bayesian Analysis}
}

@article{mitchell1988bayesian,
  title     = {{B}ayesian variable selection in linear regression},
  author    = {Mitchell, Toby J and Beauchamp, John J},
  journal   = {Journal of the American Statistical Association},
  volume    = {83(404)},
  pages     = {1023--1032},
  year      = {1988},
  publisher = {Taylor \& Francis}
}

@article{rovckova2014emvs,
  title     = {{EMVS}: the {EM} approach to {B}ayesian variable selection},
  author    = {Ro{\v{c}}kov{\'a}, Veronika and George, Edward I},
  journal   = {Journal of the American Statistical Association},
  volume    = {109(506)},
  pages     = {828--846},
  year      = {2014},
  publisher = {Taylor \& Francis}
}

@book{bernardo2009bayesian,
  title     = {{B}ayesian Theory},
  author    = {Bernardo, Jos{\'e} M and Smith, Adrian F M},
  series    = {Wiley Series in Probability and Statistics},
  year      = {2009},
  publisher = {Wiley}
}

@article{bregman1967relaxation,
  title     = {The relaxation method of finding the common point of convex sets
               and its application to the solution of problems in convex
               programming},
  author    = {Bregman, Lev M},
  journal   = {{USSR} Computational Mathematics and Mathematical Physics},
  volume    = {7(3)},
  pages     = {200--217},
  year      = {1967},
  publisher = {Elsevier}
}

@article{wedderburn1974quasi,
  title     = {Quasi-likelihood functions, generalized linear models, and the
               {G}auss--{N}ewton method},
  author    = {Wedderburn, Robert W M},
  journal   = {Biometrika},
  volume    = {61(3)},
  pages     = {439--447},
  year      = {1974},
  publisher = {Oxford University Press}
}

@article{zhang2009new,
  title     = {New aspects of {B}regman divergence in regression and
               classification with parametric and nonparametric estimation},
  author    = {Zhang, Chunming and Jiang, Yuan and Shang, Zuofeng},
  journal   = {Canadian Journal of Statistics},
  volume    = {37(1)},
  pages     = {119--139},
  year      = {2009},
  publisher = {Wiley Online Library}
}

@article{delorme2004eeglab,
  title     = {{EEGLAB}: an open source toolbox for analysis of single-trial
               {EEG} dynamics including independent component analysis},
  author    = {Delorme, Arnaud and Makeig, Scott},
  journal   = {Journal of Neuroscience Methods},
  volume    = {134(1)},
  pages     = {9--21},
  year      = {2004},
  publisher = {Elsevier}
}

@article{snodgrass1980standardized,
  title     = {A standardized set of 260 pictures: norms for name agreement,
               image agreement, familiarity, and visual complexity},
  author    = {Snodgrass, Joan G and Vanderwart, Mary},
  journal   = {Journal of Experimental Psychology: Human Learning and Memory},
  volume    = {6(2)},
  pages     = {174--215},
  year      = {1980},
  publisher = {American Psychological Association}
}

@article{besag1986statistical,
  title     = {On the statistical analysis of dirty pictures},
  author    = {Besag, Julian},
  journal   = {Journal of the Royal Statistical Society: Series B
               (Methodological)},
  volume    = {48(3)},
  pages     = {259--279},
  year      = {1986},
  publisher = {Wiley Online Library}
}

@article{mohammed2021radio,
  author    = {Mohammed, Shariq and Bharath, Karthik and Kurtek, Sebastian
               and Rao, Arvind and Baladandayuthapani, Veerabhadran},
  title     = {{RADIOHEAD}: radiogenomic analysis incorporating tumor
               heterogeneity in imaging through densities},
  volume    = {15(4)},
  journal   = {The Annals of Applied Statistics},
  publisher = {Institute of Mathematical Statistics},
  pages     = {1808--1830},
  year      = {2021},
  doi       = {10.1214/21-AOAS1458}
}

@article{hoeting1999bayesian,
  author    = {Hoeting, Jennifer A and Madigan, David and Raftery, Adrian E
               and Volinsky, Chris T},
  title     = {{B}ayesian model averaging: a tutorial},
  volume    = {14(4)},
  journal   = {Statistical Science},
  publisher = {Institute of Mathematical Statistics},
  pages     = {382--417},
  year      = {1999},
  doi       = {10.1214/ss/1009212519}
}

@article{morris2008bayesian,
  author  = {Morris, Jeffrey S and Brown, Philip J and Herrick, Richard C
             and Baggerly, Keith A and Coombes, Kevin R},
  year    = {2008},
  pages   = {479--489},
  title   = {{B}ayesian analysis of mass spectrometry proteomic data using
             wavelet-based functional mixed models},
  volume  = {64},
  journal = {Biometrics},
  doi     = {10.1111/j.1541-0420.2007.00895.x}
}

@misc{stan2023rstan,
  title  = {{RStan}: the {R} interface to {Stan}},
  author = {{Stan Development Team}},
  note   = {R package version 2.26.23},
  year   = {2023},
  url    = {https://mc-stan.org/}
}

@book{ackerman1992discovering,
  title     = {Discovering the Brain},
  author    = {Ackerman, Sandra},
  year      = {1992},
  publisher = {National Academies Press}
}

@incollection{betancourt2015hamiltonian,
  title     = {{H}amiltonian {M}onte {C}arlo for hierarchical models},
  author    = {Betancourt, Michael J and Girolami, Mark},
  booktitle = {Current Trends in {B}ayesian Methodology with Applications},
  editor    = {Upadhyay, Satyanshu K and Singh, Umesh and Dey, Dipak K
               and Loganathan, Appaia},
  chapter   = {4},
  pages     = {79--102},
  publisher = {Chapman and Hall/{CRC} Press},
  address   = {Boca Raton, FL},
  year      = {2015}
}

@incollection{neal2011mcmc,
  title     = {{MCMC} using {H}amiltonian dynamics},
  author    = {Neal, Radford M},
  booktitle = {Handbook of {M}arkov Chain {M}onte {C}arlo},
  editor    = {Brooks, Steve and Gelman, Andrew and Jones, Galin L
               and Meng, Xiao-Li},
  chapter   = {5},
  pages     = {113--162},
  publisher = {Chapman \& Hall/{CRC}},
  address   = {Boca Raton, FL},
  year      = {2011}
}

@article{battelli2009role,
  title     = {The role of the parietal lobe in visual extinction studied with
               transcranial magnetic stimulation},
  author    = {Battelli, Lorella and Alvarez, George A and Carlson, Thomas
               and Pascual-Leone, Alvaro},
  journal   = {Journal of Cognitive Neuroscience},
  volume    = {21(10)},
  pages     = {1946--1955},
  year      = {2009},
  publisher = {{MIT} Press}
}

@article{libedinsky2011role,
  title     = {Role of prefrontal cortex in conscious visual perception},
  author    = {Libedinsky, Camilo and Livingstone, Margaret},
  journal   = {Journal of Neuroscience},
  volume    = {31(1)},
  pages     = {64--69},
  year      = {2011},
  publisher = {Society for Neuroscience}
}

\newpage 

\appendix
\renewcommand{\thesection}{A.\arabic{section}}
\renewcommand{\thesubsection}{A.\arabic{section}.\arabic{subsection}}
\renewcommand{\thetable}{A.\arabic{table}}
\renewcommand{\thefigure}{A.\arabic{figure}}

\setcounter{table}{0}
\setcounter{figure}{0}

{\hfill \large ANNEXURE \hfill}


\section{Literature Review}
\label{sec:litreview}
In this section of the supplemental material, we highlight previously proposed methods found in literature to handle feature extraction in high dimensional data with complex characteristics. Specifically, we outline techniques introduced to reduce the dimension of the problem and provide a detailed review on shrinkage and selection penalty functions used to induce sparsity into the penalized loss function (PL) estimator. 

Several procedures use a combination of principal component analysis (PCA) and independent component analysis (ICA) or an extension of them to reduce the dimension of the covariate structure. \cite{calhoun2001method} proposed a factor-analytic group decomposition technique applied to functional MRI (fMRI) data, which decomposes the spatial structure using independent components analysis (ICA). \cite{lukic2002spatially} and \cite{svensen2002ica} developed methods using ICA and eigen-decomposition on the covariate structure to map spatial and temporal patterns of active brain regions. \cite{beckman2005tensorial} proposed spatio-temporal decomposition through a tensorial extension of ICA applied to multi-subject fMRI data. \cite{caffo2010twostage} proposed a two-stage method to reduce the dimension of the covariates using subject-level singular-value decomposition (SVD), followed by population-level principal component analysis (PCA). Since there is no clear and concise interpretation of the reduced dimensions obtained from PCA and ICA, the goal of conducting variable selection on associated brain regions is not achieved. 

To highlight the spatio-temporal correlations ever-present in neuroimaging measurements, several methods were proposed involving a combination of roughness and sparse penalties imposed on the spatial and temporal domains. \cite{wanmei2009distributed} used a one-way $\ell_1\ell_2$-regularization method and \cite{huang2009analysis} used two-way regularized SVD methods imposing roughness penalties on both the spatial and temporal domains  \cite{witten2009penalized} and \cite{lee2010biclustering} imposed sparse penalties on both domains, \cite{tian2012twoway} placed a rough penalty and a sparse penalty separately on the two domains. Other techniques impose regularization penalties on a family of tensor GLMs, imposing a low rank approximation on the predictor tensor for dimension reduction \citep{zhou2013tensor, zhou2014regularized}. However, such methods are computationally unattractive and lack effective statistical inference. 

\underline{Sparsity inducing penalty functions}. Although the $\ell_0$-norm penalty induces sparsity, quantifying $\hat{\boldsymbol{\beta}}_{PL}$ remains computationally cumbersome caused by repercussions from the jumpy (discontinuous) nature of indicator functions. As a way of removing the discontinuities established by the indicator function in the $\ell_0$-norm penalty, \cite{tibshirani1996regression} makes use of the $\ell_1$-norm penalty to propose a new convex optimization technique - \textit{LASSO} (least absolute shrinkage and selection operator), $\text{LASSO}(\boldsymbol{\beta} | \lambda) = \lambda \sum_{j=1}^p |\beta_j|,$ where $\lambda \geq 0$ is the tuning parameter. However, there are limitations surrounding variable selection when making use of the LASSO penalty. \cite{zou2005regularization} pointed out that the convex nature of the LASSO problem, when $p \gg n$, allows for no more than n variables to be selected, and further, struggles selecting more than one variable from a group with high pairwise correlations. In response to the preceding restrictions faced with LASSO, \cite{zou2005regularization} developed an elastic net penalty to induce sparsity, $\text{e-net}(\boldsymbol{\beta} | \lambda_1, \lambda_2) = \lambda_1 \sum_{j=1}^p |\beta_j| + \lambda_2 \sum_{j=1}^p \beta_j^2,$ where $\lambda_1, \lambda_2 \geq 0$ are tuning parameters for the LASSO component and ridge component, respectively. The elastic net penalty is a mixture of the LASSO penalty and the ridge penalty introduced to inherit the positive characteristics of LASSO while addressing its limitations. \cite{zou2005regularization} mentions that the naiveness of the elastic net penalty leads to biased estimation and classification when it does not closely resemble the LASSO or ridge penalty.  It was additionally shown by \cite{zou2006adaptive} that employing a common regularization parameter $\lambda$ leads to inconsistent estimates of $\hat{\boldsymbol{\beta}}_{PL}$. To overcome this obstacle, \cite{zou2006adaptive} developed an alternative to LASSO by introducing a weighted $\ell_1$-norm penalty, called adaptive LASSO (a-LASSO), $\text{a-LASSO}(\boldsymbol{\beta} | \lambda, \boldsymbol{w}) = \lambda \sum_{j=1}^p w_j |\beta_j|,$ where $\boldsymbol{w} = (w_1, \dots, w_p)^T$ is the weight vector and $\lambda \geq 0$ is the regularization parameter. With an appropriate choice of weights based on $\sqrt{n}$-consistent estimators for components of $\hat{\boldsymbol{\beta}}_{PL}$, it can be shown that the a-LASSO penalty removes the bias issue by dropping insignificant coefficients \citep{zou2006adaptive}. Selecting suitable weights to achieve the formerly stated advantage of a-LASSO is computationally infeasible as we are not aware of the insignificant predictors beforehand. Furthermore, using the data to construct weights is unrealistic in the presence of high dimensions. \cite{kim2008smoothly} remedy the constraints attached to the a-LASSO method with an iterative concave convex procedure coupled with the smoothly clipped absolute deviation (SCAD) penalty suggested by \cite{fan2001variable}
\begin{align*}
    \text{SCAD}(\boldsymbol{\beta} | \lambda, \gamma) = \lambda \sum_{j=1}^p |\beta_j| \mathbbm{1}_{(|\beta_j| < \lambda)} &+ \sum_{j=1}^p \Big(\frac{\gamma \lambda (|\beta_j| - \lambda) - (|\beta_j|^2 - \lambda^2)^2/2}{\gamma - 1} + \lambda^2\Big) \mathbbm{1}_{(\lambda \leq |\beta_j| \leq \gamma \lambda)} \nonumber \\
    &+ \sum_{j=1}^p \Big(\frac{(\gamma - 1) \lambda^2}{2} + \lambda^2\Big)\mathbbm{1}_{(|\beta_j| \geq \gamma \lambda)}
\end{align*}
where $\lambda \geq 0$ and $\gamma > 1$ is a shrinkage parameter. The drawback of the SCAD penalty used for this purpose is the level of computational and analytical difficulty in finding $\hat{\boldsymbol{\beta}}_{PL}$ coming from the nonconvexity of the optimization problem \citep{zhang2010nearly}. As an alternative, \cite{zhang2010nearly} developed a nearly unbiased estimation and selection method by introducing the minimax concave penalty (MCP)
\begin{align}
    \text{MCP}(\boldsymbol{\beta} | \lambda, \gamma) = \lambda \sum_{j=1}^p \Big[\int_0^{|\beta_j|} (1 - x/(\gamma \lambda))_+dx\Big], \label{mcp}
\end{align}
where $\lambda \geq 0$,s $\gamma > 0$, and subscript $+$ denotes the positive portion of a function.

\section{Details on GD Prior and GD Posterior}
\label{sec:gdpriorpost}

In this section, we elaborate on the details of the GD prior and posterior distribution. As mentioned in the main paper, the GD prior is constructed in a way that the GD estimator acts as a Bayesian version of the PL estimator. In particular, the GD estimator can be shown to be proportional to the PL estimator with the class of BD measures as the choice of loss function and an $\ell_0$-norm approximation as the penalty function. 

\subsection{ Bregman Divergence Measures as Loss Functions}
\label{sec:bdlossfn}

Here, we further discuss the wide-ranging application and flexibility of the class of BD measures, the duality property, and the bijective relationship between the class of BD measures and natural exponential family. Let $\psi: \Omega \rightarrow \mathbbm{R}$ denote a strictly convex and differentiable function on a nonempty convex set $\Omega \subseteq \mathbbm{R}^n$, and let $\boldsymbol{z}_1, \boldsymbol{z}_2 \in \Omega$. The Bregman divergence for $\boldsymbol{z}_1$ and $\boldsymbol{z}_2$ with respect to $\psi$ is defined to be $BD_{\psi}(\boldsymbol{z}_1, \boldsymbol{z}_2) = \psi(\boldsymbol{z}_1) - \psi(\boldsymbol{z}_2) - (\boldsymbol{z}_1 - \boldsymbol{z}_2)^T \nabla \psi(\boldsymbol{z}_2)$, where $\nabla \psi$ represents the gradient vector of $\psi$ \citep{bregman1967relaxation}.  

\underline{Duality property}. Originally discussed in \cite{bernardo2009bayesian}, the duality property affirms that the negative log-likelihood function can be viewed as a loss function. Extending the dual relationship between the likelihood function and loss function to the data-generating distribution, we assume that the negative log-likelihood function is a BD measure. Consider the standard normal likelihood function $f(\boldsymbol{y} | \boldsymbol{\beta}) \propto \exp\{-\frac{1}{2} L_2(\boldsymbol{y}, \boldsymbol{h}(\boldsymbol{X\beta}))\}$, where $L_2(\boldsymbol{z}_1, \boldsymbol{z}_2) = ||\boldsymbol{z}_1 - \boldsymbol{z}_2||_2^2$ denotes the squared Euclidean loss. Hence, $-\log f(\boldsymbol{y} | \boldsymbol{\beta}) \propto L_2(\boldsymbol{y}, \boldsymbol{h}(\boldsymbol{X\beta}))$, i.e., the negative log-likelihood function for the standard normal distribution is proportional to the squared Euclidean loss function. Table \ref{tab:bdloss} below shows some examples of Bregman divergence measures as loss functions for specified strictly convex functions.

\begin{table}[!t]
    \centering
    \caption{Bregman divergence measures as known loss functions.}
    \label{tab:bdloss}
    \begin{tabular}{ccr}
    \hline
    $\psi(\boldsymbol{z})$ & $BD_{\psi}(\boldsymbol{z}_1, \boldsymbol{z}_2)$ & Loss Function \\
\hline
$||\boldsymbol{z}||_2^2$ & $||\boldsymbol{z}_1 - \boldsymbol{z}_2||^2$ & Squared Euclidean loss \\
$\boldsymbol{z}^T \boldsymbol{Wz}$ & $(\boldsymbol{z}_1 - \boldsymbol{z}_2)^T \boldsymbol{W} (\boldsymbol{z}_1 - \boldsymbol{z}_2)$ & Mahalanobis distance \\
$\sum_{i=1}^n z_i\log z_i$ & $\sum_{i=1}^n z_{1i} \log\left(\frac{z_{1i}}{z_{2i}}\right) - (z_{1i} - z_{2i})$ & Kullback-Leibler divergence \\
$\sum_{i=1}^n -logz_i$ & $\sum_{i=1}^n \left\{\frac{z_{1i}}{z_{2i}} - log\left(\frac{z_{1i}}{z_{2i}}\right) - 1\right\}$ & Itakura-Saito distance \\
$\sum_{i=1}^n e^{cz_i}$ & $\sum_{i = 1}^n \left\{e^{cz_{2i}} \left(e^{c(z_{1i} - z_{2i})} - c(z_{1i} - z_{2i}) - 1\right)\right\}$ & Weighted Linex loss \\ \hline
    \end{tabular}
\end{table}

\underline{Bijection with natural exponential family}. An advantage of using the class of BD measures as our choice of loss function is the flexibility they bring into the approach as a result of the bijective relationship existing with the natural exponential family. For instance, if we consider the strictly convex function defined by $\psi(z) = \sum_{i=1}^n z_i\log z_i$, then the BD measure can be written as proportional to the Poisson likelihood function using as follows:
\begin{align*}
    f(\boldsymbol{y} | \boldsymbol{\beta}) &\propto \exp\left\{ - \sum_{i=1}^n \left[y_i \log \left(\frac{y_i}{h(\boldsymbol{x}_i^T\boldsymbol{\beta})}\right) - (y_i - h(\boldsymbol{x}_i^T\boldsymbol{\beta})\right]\right\} \\
    &\propto \prod_{i=1}^n \left(e^{-h(\boldsymbol{x}_i^T\boldsymbol{\beta})} \left[h(\boldsymbol{x}_i^T\boldsymbol{\beta})\right]^{y_i}\right).
\end{align*}
Table \ref{tab:bdexamples} provides examples of the bijective relationship between BD measures and the natural exponential family. Using the BD measure corresponding to the Bernoulli distribution, we have 
\begin{align*}
    BD_{\psi}(\boldsymbol{y}, h(\boldsymbol{X} \boldsymbol{\beta})) &= \sum_{i = 1}^N \left\{y_i \log\left(\frac{y_i}{h(\boldsymbol{x}_i^T \boldsymbol{\beta})}\right) + (1 - y_i)\log\left(\frac{1 - y_i}{1 - h(\boldsymbol{x}_i^T \boldsymbol{\beta})}\right)\right\} \\
    &= \sum_{i = 1}^N \left\{ \log\left(\left[\frac{y_i}{h(\boldsymbol{x}_i^T \boldsymbol{\beta})}\right]^{y_i} \left[\frac{1 - y_i}{1 - h(\boldsymbol{x}_i^T \boldsymbol{\beta})}\right]^{1 - y_i}\right)\right\},
\end{align*}
and the likelihood function can be written as 
\allowdisplaybreaks
\begin{equation*}
    f_{\psi}(\boldsymbol{y} | \boldsymbol{\beta}) \propto \exp\{-BD_{\psi}(\boldsymbol{y}, h(\boldsymbol{X} \boldsymbol{\beta}))\} \propto \prod_{i = 1}^N \left( \left[h(\boldsymbol{x}_i^T \boldsymbol{\beta})\right]^{y_i} \left[1 - h(\boldsymbol{x}_i^T \boldsymbol{\beta})\right]^{1 - y_i} \right),
\end{equation*}
which is the Bernoulli likelihood function. It is worth mentioning that a more general distribution family, relative to the natural exponential family, can be generated by Bregman divergence measures. For instance, \cite{zhang2009new} demonstrated that the quasi-likelihood function \citep{wedderburn1974quasi} can be generated by BD measures.

\begin{table}[!t]
    \centering
    \caption{Bregman divergence measures and their corresponding distributions in the natural exponential family along with the associated strictly convex function.}
    \label{tab:bdexamples}
    \begin{tabular}{ccr}
       \hline
$\psi(z)$ & $BD_{\psi}(z_1, z_2)$ & Distribution \\
\hline
$\frac{1}{2\sigma^2}z^2$ & $\frac{1}{2\sigma^2}(z_1 - z_2)^2$ & Gaussian \\
$z\log z$ & $z_1 \log\left(\frac{z_1}{z_2}\right) - (z_1 - z_2)$ & Poisson \\
$-\log z$ & $\frac{z_1}{z_2} - \log\left(\frac{z_1}{z_2}\right) - 1$ & Exponential \\
$z\log z + (1 - z)\log (1 - z)$ & $z_1\log\left(\frac{z_1}{z_2}\right) + (1 - z_1)\log\left(\frac{1 - z_1}{1 - z_2}\right)$ & Bernoulli \\
\hline
    \end{tabular}
\end{table}

\subsection{ GD Penalty}
\label{sec:gdpenalty}

Here, we emphasize the advantage of the GD penalty over the SELO penalty. In particular, we demonstrate numerically that the GD penalty is differentiable everywhere and that the SELO penalty is not differentiable at $\beta_j = 0$. \cite{dicker2013variable} defined the SELO penalty as $\text{SELO}(\boldsymbol{\beta} | \lambda, \tau_0) = \frac{\lambda}{\log (2)} \sum_{j=1}^p \log(\frac{|\beta_j|}{|\beta_j| + \tau_0} + 1)$, where $\lambda \geq 0$ is a tuning parameter and $\tau_0 > 0$ is a deterministic constant. Note, as $\tau_0 \rightarrow 0$, $\log (\frac{|\beta_j|}{|\beta_j| + \tau_0} + 1) \rightarrow \log (2)$ for $|\beta_j| \neq 0$. Hence, the SELO penalty converges to the $\ell_0$-norm penalty as $\tau_0$ tends to 0.

\underline{Advantage of GD penalty over SELO penalty}. Although the SELO penalty is also a continuous approximation of the $\ell_0$-norm penalty, it has restrictions. At $\beta_j = 0$, the SELO penalty is not differentiable. Whereas, the GD penalty is differentiable everywhere. Due to the smoothness of the GD penalty function, applying the Newton-Rhapson algorithm is stable and convergence is fast. We provide a numerical demonstration of this advantage the GD penalty has over the SELO penalty. Observe, 
\begin{equation*}
    \frac{\delta \text{SELO}(\boldsymbol{\beta} | \lambda, \tau_0)}{\delta \beta_j} = \frac{\lambda}{log(2)} \times \frac{\tau_0 \beta_j}{|\beta_j|(2|\beta_j| + \tau_0)(|\beta_j| + \tau_0)^2},
\end{equation*}
which does not exist at $\beta_j = 0$. On the other hand,
\begin{equation*}
    \frac{\delta \tilde{\ell}_0(\boldsymbol{\beta} | \lambda, \tau)}{\delta \beta_j} = \lambda \times \frac{2\{X_{[, j]}^T X_{[, j]}\}\beta_j^2\tau_0^2}{(\tau_0^2 + \{X_{[, j]}^T X_{[, j]}\}\beta_j^2)^2},
\end{equation*}
which is 0 at $\beta_j = 0$. This shows that the GD penalty is continuous and differentiable everywhere, and the SELO penalty is continuous, but not differentiable everywhere.

\subsection{ GD Posterior Distribution}
\label{sec:gdpostdist}

It is known that if the prior distribution is proper and the likelihood function is non-degenerate, then the set of points leading to an improper posterior distribution has Lebesgue measure zero, i.e., the posterior distribution is proper almost everywhere. Further, if the likelihood function is uniformly bounded, then the posterior distribution is proper everywhere. Notice, for all $\boldsymbol{\beta} \in \mathbbm{R}^L$, the GD prior, $\pi_G(\boldsymbol{\beta} | \hat{\boldsymbol{d}}_{MAP})$, is proper, and the likelihood function, $f_{\psi}(\boldsymbol{y} | \boldsymbol{\beta})$, is non-degenerate and bounded. One can see that the GD prior is proper from Lemma \ref{l0converge} and by observing the form of $\tilde{\ell}_0(\cdot | \lambda , \tau)$. Also, to verify that the likelihood function is non-degenerate and bounded, one can recall the form of the likelihood, which uses the duality property, and note that BD measures are non-negative. Thus, the GD posterior distribution is proper. However, posterior computation in high dimensional problems requires more careful attention.

We have several mechanisms to summarize the information obtained from the MCMC samples. We may compute the posterior mode (MAP estimate), posterior mean, or posterior median, and conduct inference based on point estimates. The posterior distribution of $\boldsymbol{\beta}$ can be used to assess the uncertainty associated with our estimation procedure. Thus, we are able to calculate credible intervals and highest posterior density (HPD) intervals for the model parameters. The posterior distribution can be used to generate posterior probabilities to determine the ``best" model or set of models, and generate further probabilities of interest in variable selection \citep{mitchell1988bayesian}. We may also use Bayesian model averaging \citep{hoeting1999bayesian} to incorporate multiple scenarios explored by the MCMC sampler.


\section{Simulation Settings and Details}
\label{sec:simsettings}
In longitudinal simulation studies, only a single time point is considered, so one can simply generate the data, $\boldsymbol{X}_t$, which can be used along with the true coefficient vector, $\boldsymbol{\beta}_t$, and known link function, $h(\cdot)$, to generate the response vector $\boldsymbol{y}$. A major challenge in simulating data for our framework is that $\boldsymbol{X}_t$ and $\boldsymbol{\beta}_t$ vary over time, while the response vector remains the same, i.e., $\boldsymbol{y}_t = \boldsymbol{y}$, for all $t = 1, \dots, \tau$. To overcome this hurdle, we generate a Bernoulli response vector $\boldsymbol{y} \in \mathbbm{R}^{n}$ with a fixed probability, we set $P(y_i = 1) = 77/122$ to imitate the real data. The three-dimensional data array $\chi \in \mathbbm{R}^{n \times L \times \tau}$ is constructed by first generating $\mathcal{U} \in \mathbbm{R}^{N \times L \times \tau}$ from a multivariate normal distribution with mean $\boldsymbol{0}$, such that $N$ is chosen to be sufficiently larger than $n$. Taking into account the high subject-to-subject variability in the data, the covariance matrix is constructed by $\Sigma_{ij} = \rho^{|i - j|}$, $i,j = 1, \dots, L$, where $0 < \rho < 1$ is chosen to incorporate correlations between locations. In our simulations, we choose $\rho = 0.5$. For $t = 1, \dots, \tau$, let $\boldsymbol{X}_t \in \mathbbm{R}^{n \times L}$ denote the matrix along the third dimension of $\chi$ and $\boldsymbol{U}_t \in \mathbbm{R}^{N \times L}$ denote the matrix along the third dimension of $\mathcal{U}$. We generate rows of $\boldsymbol{X}_t$ from the larger collection of rows of $\boldsymbol{U}_t$ so that $\boldsymbol{X}_t \boldsymbol{\beta}_t$ corresponds to $\boldsymbol{y}$, for all $t$. See Figure \ref{fig:localmods} for a demonstration of our local modeling approach. 

\begin{figure}[!t]
\center
    \makebox[\textwidth][c]{\includegraphics[scale=0.65, trim=1.5cm 4cm 4.25cm 7.25cm, clip]{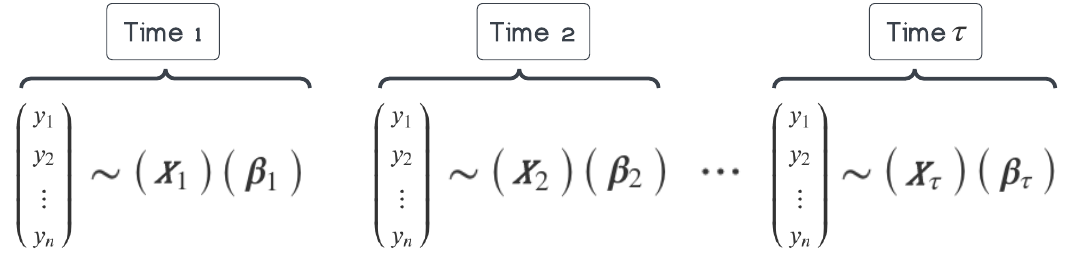}}
    \caption{Illustration of local Bayesian models.}
    \label{fig:localmods}
\end{figure}

We generate the true coefficient matrix $\boldsymbol{\beta} = (\beta_{lt})_{l = 1, \dots, L, t = 1, \dots, \tau}$ by sampling entries from a Bernoulli distribution and randomly assigning $20\%$ of the locations as active. To differentiate between active and inactive locations, entries in active rows are assigned lower probability of being 0 compared to inactive locations. Entries in inactive locations are 0 with probability 0.8 and active locations are 0 with probability 0.2. In other words, active locations should have more 1 entries, whereas inactive locations should have more 0 entries. We then randomly select $30\%$ of the coefficients to be multiplied by -1 to account for negative electrical signals. After we generate the coefficient matrix, we fix the response vector and generate the data array as described above.

Due to the low signal-to-noise ratio observed in EEG data, caused by the non-invasiveness of the procedure, we add random noise to the data from a normal distribution centered at 0. By adding random noise to the data, we also account for the sporadic behavior seen in EEG measurements. Without doing this, the simulation settings are infeasible and the results will not accurately depict the performance of the model when applied to real data. We explored three different values of $noiseSD$, 1, 1.5, and 2. Table \ref{tab:simres1-noise1} and Table \ref{tab:simres2-noise1} contain simulation results from using a standard deviation of 1, i.e., $\text{SNR} = 1$. Table \ref{tab:simres1-noise2} and Table \ref{tab:simres2-noise2} contain simulation results from using a standard deviation of 2, i.e., $\text{SNR} = 1/2$.

\begin{table}[!t]
\centering
\caption{Simulation results: estimation and feature extraction; $\text{SNR} = 1$.} 
\begin{tabular}{ccccc}
\hline 
$L$ & $\tau$ & $rMSE_{est}$ & corrInd & incorrInd \\
\hline
25 & 100 & 0.516 (0.010) & 0.946 (0.094) & 0.035 (0.047) \\
& 175 & 0.514 (0.007) & 0.980 (0.060) & 0.008 (0.012) \\
& 250 & 0.514 (0.006) & 0.990 (0.044) & 0.004 (0.017) \\
\hline
60 & 100 & 0.573 (0.045) & 0.793 (0.131) & 0.184 (0.094) \\
& 175 & 0.565 (0.023) & 0.860 (0.107) & 0.140 (0.078) \\
& 250 & 0.566 (0.025) & 0.903 (0.093) & 0.117 (0.085) \\
\hline
75 & 100 & 0.826 (0.101) & 0.714 (0.125) & 0.246 (0.103) \\
& 175 & 0.825 (0.088) & 0.789 (0.126) & 0.242 (0.094) \\
& 250 & 0.841 (0.070) & 0.841 (0.108) & 0.226 (0.097) \\
\hline
\end{tabular}
\label{tab:simres1-noise1}
\end{table}

\begin{table}[!t]
\centering
\caption{Simulation results: prediction; $\text{SNR} = 1$.}
\begin{tabular}{cccccc}
\hline
$L$ & $\tau$ & TPR & FPR & PE & AUC \\
\hline
25 & 100 & 0.990 (0.022) & 0.016 (0.038) & 0.012 (0.019) & 0.999 (0.003) \\
& 175 & 0.997 (0.013) & 0.002 (0.015) & 0.003 (0.010) & 1.000 (0.000) \\
& 250 & 1.000 (0.000) & 0.001 (0.007) & 0.000 (0.003) & 1.000 (0.000) \\
\hline
60 & 100 & 0.970 (0.048) & 0.034 (0.059) & 0.031 (0.036) & 0.995 (0.011) \\
& 175 & 0.991 (0.026) & 0.005 (0.022) & 0.007 (0.017) & 0.999 (0.003) \\
& 250 & 0.998 (0.012) & 0.000 (0.000) & 0.001 (0.006) & 1.000 (0.001) \\
\hline
75 & 100 & 0.967 (0.040) & 0.033 (0.057) & 0.032 (0.034) & 0.994 (0.011) \\
& 175 & 0.995 (0.018) & 0.003 (0.018) & 0.005 (0.013) & 1.000 (0.002) \\
& 250 & 1.000 (0.005) & 0.003 (0.020) & 0.001 (0.007) & 1.000 (0.001) \\
\hline
\end{tabular}
\label{tab:simres2-noise1}
\end{table}

For $\text{SNR} = 1$, we can see that the model performs extremely well. In particular, when $\tau$ is large, the model achieves near perfect prediction accuracy. Although we observe a jump in rMSE when $L$ changes from 60 to 75 in the $\text{SNR} = 1$ case, we observe a distinctively larger jump in $\text{SNR} = 1/2$ case. It is reasonable to see this behavior due to the fact that lower $\text{SNR}$ will have a greater impact when the number of locations is large.

\begin{table}[!t]
\centering
\caption{Simulation results: estimation and feature extraction; $\text{SNR} = 1/2$.} 
\begin{tabular}{ccccc}
\hline 
$L$ & $\tau$ & $rMSE_{est}$ & corrInd & incorrInd \\
\hline
25 & 100 & 0.553 (0.009) & 0.818 (0.166) & 0.174 (0.124) \\
& 175 & 0.552 (0.006) & 0.876 (0.153) & 0.144 (0.129) \\
& 250 & 0.552 (0.005) & 0.876 (0.147) & 0.096 (0.112) \\
\hline
60 & 100 & 0.581 (0.040) & 0.613 (0.168) & 0.305 (0.089) \\
& 175 & 0.582 (0.030) & 0.637 (0.181) & 0.285 (0.099) \\
& 250 & 0.583 (0.028) & 0.682 (0.166) & 0.301 (0.094) \\
\hline
75 & 100 & 0.966 (0.142) & 0.611 (0.138) & 0.334 (0.083) \\
& 175 & 0.971 (0.092) & 0.691 (0.139) & 0.322 (0.098) \\
& 250 & 0.979 (0.072) & 0.735 (0.131) & 0.342 (0.104) \\
\hline
\end{tabular}
\label{tab:simres1-noise2}
\end{table}

\begin{table}[!t]
\centering
\caption{Simulation results: prediction; $\text{SNR} = 1/2$.}
\begin{tabular}{cccccc}
\hline
$L$ & $\tau$ & TPR & FPR & PE & AUC \\
\hline
25 & 100 & 0.787 (0.107) & 0.215 (0.136) & 0.211 (0.093) & 0.872 (0.090) \\
& 175 & 0.875 (0.078) & 0.116 (0.099) & 0.122 (0.063) & 0.950 (0.043) \\
& 250 & 0.902 (0.074) & 0.105 (0.098) & 0.100 (0.066) & 0.966 (0.035) \\
\hline
60 & 100 & 0.793 (0.116) & 0.220 (0.127) & 0.213 (0.090) & 0.865 (0.088) \\
& 175 & 0.832 (0.114) & 0.153 (0.112) & 0.162 (0.089) & 0.919 (0.072) \\
& 250 & 0.895 (0.079) & 0.120 (0.100) & 0.110 (0.070) & 0.957 (0.052) \\
\hline
75 & 100 & 0.884 (0.083) & 0.113 (0.094) & 0.114 (0.066) & 0.955 (0.043) \\
& 175 & 0.935 (0.058) & 0.055 (0.071) & 0.061 (0.048) & 0.986 (0.023) \\
& 250 & 0.979 (0.036) & 0.026 (0.053) & 0.023 (0.032) & 0.996 (0.012) \\
\hline
\end{tabular}
\label{tab:simres2-noise2}
\end{table}

The overall trends observed in estimation, feature extraction, and prediction are similar to those observed in Table 3 and 4 when $\text{SNR} = 1/1.5$. As expected, the overall performance and accuracy of the model dwindle as the $\text{SNR}$ decreases. With a low $\text{SNR}$, it is challenging for the model to distinguish the meaningful signal from the noise, which leads to decreased accuracy in classifying subjects and increased difficulty in identifying active locations. Also, low $\text{SNR}$ can lead to increased sensitivity to minor fluctuations or variations in the data.

\paragraph{Sensitivity of the Stage~1 FDR Threshold $c$.} To assess whether Stage~1 variable selection is sensitive to the FDR constant $c$ used to compute posterior inclusion probabilities, we repeated the representative simulation setting (SNR = 1/2) under three choices, $c \in \{10^{-4}, 10^{-3}, 10^{-2}\}$. The results show robustness across this range. In particular, the true-positive rate is identical for $c=10^{-4}$ and $c=10^{-3}$ ($\mathrm{TPR}=0.64$ in both cases), while the false-positive rate remains zero and AUC is $0.944$ throughout all three choices of $c$. The only noticeable difference is a small reduction in $\mathrm{TPR}$ at the largest threshold considered ($\mathrm{TPR}=0.60$ at $c=10^{-2}$), without any increase in false positives or loss in AUC. Overall, these findings indicate that the Stage~1 screening and downstream prediction are stable over two orders of magnitude around our chosen value. Selecting $c=10^{-3}$ therefore preserves the best observed recovery among the tested values while maintaining a deliberately conservative stance (zero false positives), which is appropriate given that the true signal-to-noise ratio in the EEG application is unknown and may be lower than in the simulation setting.

\section{EEG Analysis and Visualization}
\label{sec:eeganalysis}

In this section, we describe the graphical tools used to visualize the EEG data, expand on the programming approach used to fit the EEG data, and provide additional interpretations of the results.

\underline{ERP plots and scalp maps}. Although more advanced approaches to signal processing of biophysical signals in the brain are available, the use of amplitude and latency measures of peaks in EEG trial averages, known as event related potentials (ERPs), are still heavily utilized in electrophysiology research \citep{delorme2004eeglab}. An ERP-image plot contains several time series curves (one for each location) which represent the average ``potential" (amongst all subjects) at each latency. \cite{delorme2004eeglab} additionally described a data visualization method referred to as scalp maps, which show the topographical behavior of the distribution of average potential at a specified time point.

\underline{Fitting EEG data}. Prior to using LOOCV, we were averaging the results over 100 replications as we did in the simulations. That is, we randomly generated 100 different training and testing sets, fit the model on each of the training sets, analyzed the results using the corresponding testing sets, and averaged the results from each of the 100 replications. It was clear that certain seeds resulted in poor splits of the data. We could see that some splits returned high tpr's and low fpr's, while others had values of tpr's and fpr's close to 0.5. It followed that the AUC values below 0.5 corresponded to the same bad splits. To avoid ``bad" splits, we decided to use LOOCV. The results from LOOCV were much nicer and proved that the bad splits were the issue before. Out of all 122 folds, an average of approximately $3.87$ locations were selected, which accounts for approximately $6.79\%$ of the 57 locations being considered in the data set. We tried using a union of all of the locations selected at least once, but this only amounted to 18/57 ($\approx 31.58\%$) of the locations. We found that majority of the folds resulted in the model selecting $3$ of the $57$ locations as active. To be specific, $74/122$ ($\approx 60.66\%$) of the folds selected exactly $3$ locations as active, and the indices of the selected locations were exactly the same: 35 (CP5), 43 (P7), 52 (P07). Interesting enough, all 122 of the folds selected electrodes P7 and P07, and 121 folds selected CP5; that is, CP5, P7, and P07 were all selected in 121/122 folds. Aside from those three, location 2 (FPZ) was selected in 39/122 ($\approx 31.97\%$), location 16 (FT7) was selected in 20/122 ($\approx 16.39\%$), and location 44 (P5) was selected in 15/122 ($\approx 12.30\%$). For reference to the scalp map pertaining to the frequency of selection of locations, we provide a topographic image of the scalp in Figure \ref{fig:brainElecPlot} which specifies the precise locations of the electrodes. 
\begin{figure}[!ht]
    \centering
    \includegraphics[scale=0.53, trim=0.75cm 0.75cm 0.75cm 0.75cm, clip]{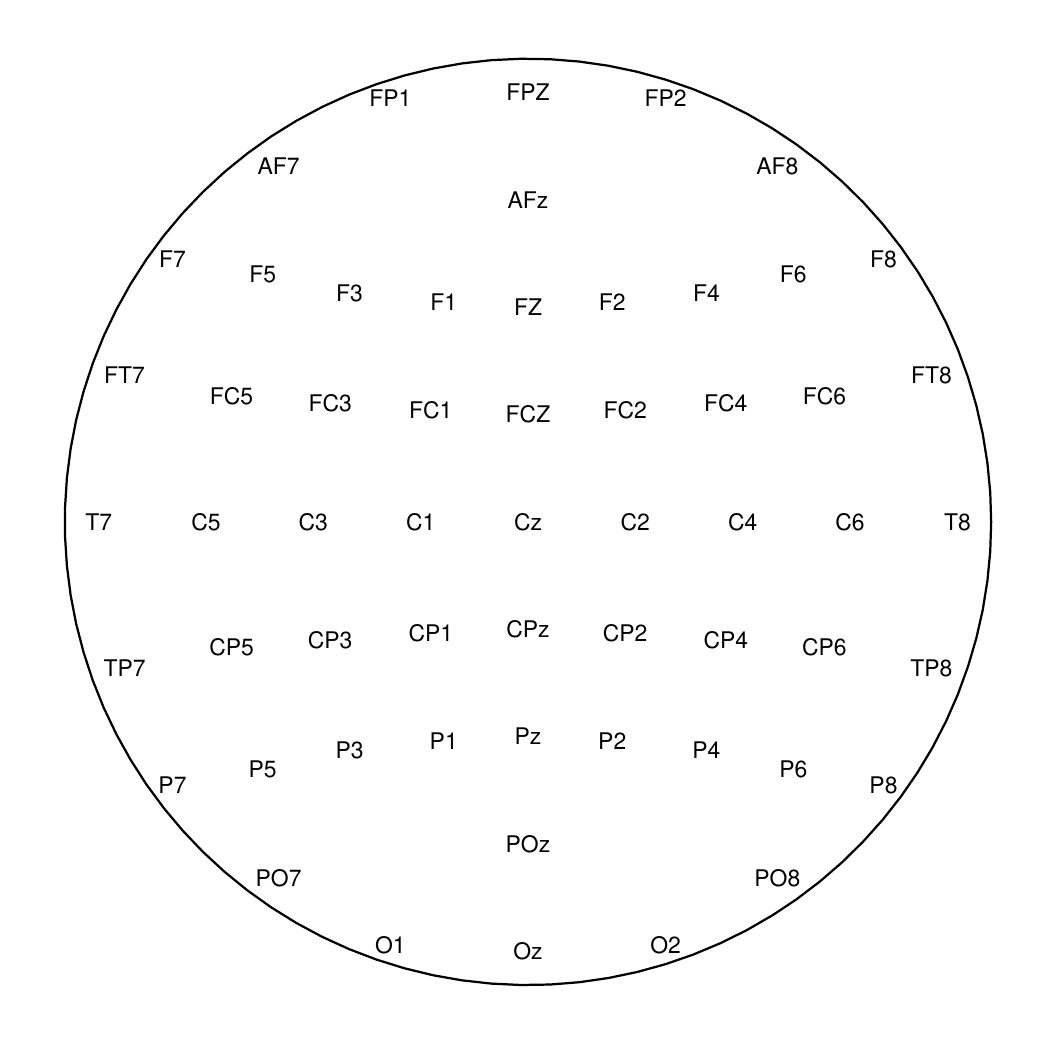}
    \caption{Showing names of electrodes at their corresponding locations of the brain.}
    \label{fig:brainElecPlot}
\end{figure}


\section{Proofs of Useful Results}
\label{sec:proofs}

\begin{lem}
    If $g(\tau_0 | x) = \frac{x^2}{\tau_0^2 + x^2}$, then $\underset{\tau_0 \rightarrow 0}{lim} \text{ } g(\tau_0 | x = 0) = 0$ and $\underset{\tau_0 \rightarrow 0}{lim} \text{ } g(\tau_0 | x \neq 0) = 1.$
    \label{l0converge}
\end{lem}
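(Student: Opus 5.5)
The plan is to split into the two cases in the statement and, in each, evaluate $g(\tau_0 \mid x)$ explicitly before taking the limit. Throughout, $\tau_0>0$ as in the definition of the GD prior, so the denominator $\tau_0^2 + x^2$ is strictly positive and $g(\tau_0 \mid x)$ is well defined for every real $x$.

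\emph{Case $x = 0$.} Here $g(\tau_0 \mid 0) = 0/\tau_0^2 = 0$ for every $\tau_0 > 0$. The function is identically zero on $(0,\infty)$, so its limit as $\tau_0 \to 0$ is $0$. The only point to flag is that we never evaluate at $\tau_0 = 0$, where the expression would be $0/0$. Since the limit is taken from $\tau_0>0$, this causes no problem.

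\emph{Case $x \neq 0$.} Now $x^2 > 0$, and we can divide numerator and denominator by $x^2$:
\begin{equation*}
g(\tau_0 \mid x) = \frac{1}{1 + \tau_0^2/x^2}.
\end{equation*}
The map $\tau_0 \mapsto 1/(1+\tau_0^2/x^2)$ is continuous at $\tau_0 = 0$, being a rational function whose denominator is at least $1$. Hence the limit equals its value at $\tau_0 = 0$, which is $1$. If an $\varepsilon$--$\delta$ argument is preferred, note that
\begin{equation*}
\left|1 - g(\tau_0 \mid x)\right| = \frac{\tau_0^2}{\tau_0^2 + x^2} \le \frac{\tau_0^2}{x^2},
\end{equation*}
so choosing $\delta = |x|\sqrt{\varepsilon}$ suffices.

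No step is a genuine obstacle. The only subtlety is the one already noted: the $x = 0$ case must be handled for $\tau_0 > 0$ and not by substituting $\tau_0 = 0$. Finally, applying the lemma with $x = \{(\boldsymbol{X}_{[,j]})^T\boldsymbol{X}_{[,j]}\}^{1/2}\beta_j$ to each summand of $\tilde{\ell}_0$ shows that the GD penalty converges to $\lambda\sum_{j=1}^L \mathbbm{1}_{(\beta_j\neq 0)}$, assuming $\boldsymbol{X}_{[,j]} \neq \boldsymbol{0}$.
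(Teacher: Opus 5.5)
Your proof is correct. The paper omits its proof of this lemma as elementary, and your two-case argument is the routine verification it has in mind: the function is identically zero for $\tau_0>0$ when $x=0$, and the bound $|1-g(\tau_0\mid x)|\le \tau_0^2/x^2$ handles $x\neq 0$. Your closing remark that applying the lemma to $\tilde{\ell}_0$ implicitly requires $\boldsymbol{X}_{[,j]}\neq\boldsymbol{0}$ is a correct observation the paper leaves unstated.
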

The proof for Lemma \ref{l0converge} is omitted as the result is elementary.

\begin{lem}
    Let $\pi_{GD}(\boldsymbol{\beta}, \boldsymbol{d})$ denote the GD prior. Then, for any $\lambda \geq 0$ and $\tau_0 > 0$, $\underset{\boldsymbol{d}}{\text{arg max}}\{\pi_{GD}(\boldsymbol{\beta}, \boldsymbol{d})\} = (\frac{2\lambda}{\tau_1^2 + \beta_1^2}, \dots, \frac{2\lambda}{\tau_p^2 + \beta_p^2})$
\begin{align*}
    \underset{\boldsymbol{d}}{\text{arg max}}\{\pi_{GD}(\boldsymbol{\beta}, \boldsymbol{d})\} = \left(\frac{2\lambda}{\tau_1^2 + \beta_1^2}, \dots, \frac{2\lambda}{\tau_p^2 + \beta_p^2}\right),
\end{align*}
where $\tau_j^2 = \tau_0^2/\{(\boldsymbol{X}_{[,j]})^T\boldsymbol{X}_{[,j]}\}$, $j = 1, \dots, p$. Further, if $\hat{\boldsymbol{d}} = \underset{\boldsymbol{d}}{\text{argmax}}\{\pi_{GD}(\boldsymbol{\beta}, \boldsymbol{d})\}$, then
\setlength{\belowdisplayskip}{0pt}
\begin{align*}
    \pi_{GD}(\boldsymbol{\beta}, \hat{\boldsymbol{d}}) &\propto \exp\left\{- \tilde{\ell_0}(\boldsymbol{\beta} | \lambda, \tau_0)\right\}.
\end{align*}
\label{dmap}
\end{lem}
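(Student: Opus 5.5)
The plan is to exploit the product structure of the GD prior in Eqs. (\ref{gd})--(\ref{d}). Combining the two factors, for fixed $\boldsymbol{\beta}$ we have
\begin{equation*}
\pi_{GD}(\boldsymbol{\beta}, \boldsymbol{d}) \propto \prod_{j=1}^p d_j^{\lambda} \exp\left\{-\tfrac{1}{2}(\tau_j^2 + \beta_j^2) d_j\right\},
\end{equation*}
where $\tau_j^2 = \tau_0^2/\{(\boldsymbol{X}_{[,j]})^T\boldsymbol{X}_{[,j]}\}$. Since the joint density factorizes over $j$ into nonnegative factors, the maximization over $\boldsymbol{d}\in(0,\infty)^p$ splits into $p$ separate one-dimensional problems. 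Each problem is to maximize $g_j(d) = \lambda\log d - \tfrac12(\tau_j^2+\beta_j^2)d$ over $d>0$.

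For $\lambda>0$, $g_j$ is strictly concave because $g_j''(d) = -\lambda/d^2 < 0$. Its first-order condition $\lambda/d - \tfrac12(\tau_j^2+\beta_j^2)=0$ has the unique solution $\hat d_j = 2\lambda/(\tau_j^2+\beta_j^2)$. This point is finite and positive because $\tau_0>0$ forces $\tau_j^2>0$, even when $\beta_j=0$. Since $g_j(d)\to-\infty$ both as $d\to0^+$ and as $d\to\infty$, $\hat d_j$ is the global maximizer, which gives the stated arg max.

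The boundary case $\lambda=0$ needs a separate remark. There the factor is $\exp\{-\tfrac12(\tau_j^2+\beta_j^2)d\}$, whose supremum is approached as $d\to 0$. The formula returns $\hat d_j=0$, so it holds with the arg max taken over the closure $[0,\infty)$. The penalty is identically zero in this case, so the second claim is trivial there. This boundary case is the only delicate point, and I would handle it by that remark.

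For the second claim, substitute $\hat d_j$ back into each factor. We get $\hat d_j^{\lambda} = (2\lambda)^{\lambda}(\tau_j^2+\beta_j^2)^{-\lambda}$ and $\exp\{-\tfrac12(\tau_j^2+\beta_j^2)\hat d_j\} = e^{-\lambda}$, both of the latter being constant in $\boldsymbol{\beta}$. Hence
\begin{equation*}
\pi_{GD}(\boldsymbol{\beta},\hat{\boldsymbol{d}}) \propto \exp\left\{-\lambda\sum_{j=1}^p \log(\tau_j^2+\beta_j^2)\right\}.
\end{equation*}
It remains to relate this to $\tilde\ell_0$. Note that $\frac{(\boldsymbol{X}_{[,j]})^T\boldsymbol{X}_{[,j]}\,\beta_j^2}{\tau_0^2+(\boldsymbol{X}_{[,j]})^T\boldsymbol{X}_{[,j]}\,\beta_j^2} = \frac{\beta_j^2}{\tau_j^2+\beta_j^2} = 1-\frac{\tau_j^2}{\tau_j^2+\beta_j^2}$. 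The natural route is to write $\lambda\log(\tau_j^2+\beta_j^2) = \lambda\log\tau_j^2 + \lambda\log\bigl(1+\beta_j^2/\tau_j^2\bigr)$ and drop the constant $\lambda\log\tau_j^2$.

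Here the main obstacle appears. The function $\log(1+\beta_j^2/\tau_j^2)$ is not exactly $\beta_j^2/(\tau_j^2+\beta_j^2)$. It agrees to first order near $\beta_j=0$, and both functions behave like $\ell_0$ indicators after appropriate normalization as $\tau_0\to0$, by Lemma \ref{l0converge}. I would therefore interpret the proportionality in the sense intended in \cite{goh2018bayesian}, namely up to this monotone reparametrization. Concretely, I would first check that $u\mapsto\log(1+u)$ and $u\mapsto u/(1+u)$, with $u=\beta_j^2/\tau_j^2$, are both increasing and share the same limiting indicator behaviour. If an exact identity is required, I would instead try to absorb the discrepancy by adjusting the exponent of $d_j$ in $\pi_D$. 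Failing that, I would state the result as an asymptotic equivalence as $\tau_0\to0$, normalizing the penalty by $\log(1/\tau_j^2)$ and invoking Lemma \ref{l0converge}.
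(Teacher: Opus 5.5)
Your treatment of the arg max is the paper's argument: factorize over $j$, solve $\lambda/d_j=\tfrac12(\tau_j^2+\beta_j^2)$, and check concavity. Your $\lambda=0$ remark correctly handles a case the paper skips, since no maximizer exists in $(0,\infty)$ there.

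The gap is the second claim. Your proposal does not establish $\pi_{GD}(\boldsymbol{\beta},\hat{\boldsymbol{d}})\propto\exp\{-\tilde{\ell}_0(\boldsymbol{\beta}\mid\lambda,\tau_0)\}$, and none of your fallbacks can:
\begin{itemize}
\item Replacing $\log(1+u)$ by the monotonically related $u/(1+u)$ changes both the density and the posterior mode.
\item Changing the exponent of $d_j$ in $\pi_D$ only changes the power in the profile, because $\max_{d>0}d^{a}e^{-cd/2}=(2a/e)^{a}c^{-a}$.
\item A normalized $\tau_0\to0$ limit is a different statement.
\end{itemize}
The real issue is that your computation $\pi_{GD}(\boldsymbol{\beta},\hat{\boldsymbol{d}})\propto\prod_j(\tau_j^2+\beta_j^2)^{-\lambda}$ is correct, and it shows the display is false as a proportionality in $\boldsymbol{\beta}$. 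The ratio to $\exp\{-\tilde{\ell}_0\}$ is proportional to $\prod_j(\tau_j^2+\beta_j^2)^{-\lambda}\exp\{\lambda\beta_j^2/(\tau_j^2+\beta_j^2)\}$. This is positive at $\beta_j=0$ and tends to $0$ as $|\beta_j|\to\infty$, for every $\lambda>0$ and $\tau_0>0$. If $\hat{\boldsymbol{d}}$ is instead frozen at a particular value, $\pi_{GD}(\cdot,\hat{\boldsymbol{d}})$ is Gaussian in $\boldsymbol{\beta}$, so that reading fails too.

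The paper reaches $\tilde{\ell}_0$ by mixing these two readings. It uses $\pi_{GD}(\boldsymbol{\beta},\boldsymbol{d})\propto\pi_G(\boldsymbol{\beta}\mid\boldsymbol{d})$, which is valid only for fixed $\boldsymbol{d}$. It then substitutes the $\boldsymbol{\beta}$-dependent $\hat{\boldsymbol{d}}$ into the exponent while discarding $\hat d_j^{1/2}$ as a constant. What survives is an exact identity: $\tfrac12\hat d_j\beta_j^2=\lambda\beta_j^2/(\tau_j^2+\beta_j^2)$, so $\exp\{-\tfrac12\sum_j\hat d_j\beta_j^2\}=\exp\{-\tilde{\ell}_0(\boldsymbol{\beta}\mid\lambda,\tau_0)\}$. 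But the discarded pieces are not constant. The exponential factor of $\pi_D(\hat{\boldsymbol{d}})$ equals $e^{-p\lambda}\exp\{+\tilde{\ell}_0(\boldsymbol{\beta}\mid\lambda,\tau_0)\}$ and cancels that kernel exactly. What remains is $\prod_j\hat d_j^{\lambda}$, i.e.\ your log penalty $\lambda\sum_j\log(\tau_j^2+\beta_j^2)$.

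This is also the penalty relevant to joint MAP estimation. A fixed point of alternating maximization in $\boldsymbol{d}$ and $\boldsymbol{\beta}$ satisfies $\partial BD_{\psi}/\partial\beta_j+2\lambda\beta_j/(\tau_j^2+\beta_j^2)=0$. That is the stationarity condition of $BD_{\psi}+\lambda\sum_j\log(\tau_j^2+\beta_j^2)$, not of $BD_{\psi}+\tilde{\ell}_0$.

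So instead of reinterpreting ``$\propto$'', prove what is actually true: either the kernel identity or the profile with the log penalty. Then state explicitly that the lemma's second display does not hold as written.
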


\noindent{\textbf{Proof of Lemma 2:}} We will find the arg max of the GD prior $\pi_{GD}(\boldsymbol{\beta}, \boldsymbol{d})$ with respect to a single $d_j$ for simplicity, as the form of $\hat{d}_j$ will be similar for each $j$. Recall, 
\begin{gather*}
    \pi_{GD}(\boldsymbol{\beta}, \boldsymbol{d}) \propto \pi_G(\boldsymbol{\beta} | \boldsymbol{d}) \pi_D(\boldsymbol{d}),
\end{gather*}
where $\pi_G(\boldsymbol{\beta} | \boldsymbol{d})$ and $\pi_D(\boldsymbol{d})$. Note, the arg max of a function and its natural log are equivalent. We will work with the log of the GD prior for easier derivation. Observe, 
\begin{gather*}
    \frac{\partial log\pi_{GD}(\boldsymbol{\beta}, \boldsymbol{d})}{\partial d_j} = \frac{1}{2d_j} - \frac{\beta_j^2}{2} + \frac{2\lambda - 1}{2d_j} - \frac{\tau_0^2}{2\{(\boldsymbol{X}_{[,j]})^T\boldsymbol{X}_{[,j]}\}}. 
\end{gather*}
Then, 
\begin{align*}
    \frac{\partial log\pi_{GD}(\boldsymbol{\beta}, \boldsymbol{d})}{\partial d_j} \Big|_{d_j = \hat{d}_j} = 0 &\implies 2\lambda = \hat{d}_j \beta_j^2 + \hat{d}_j \frac{\tau_0^2}{\{(\boldsymbol{X}_{[,j]})^T\boldsymbol{X}_{[,j]}\}} \\
    &\implies 2 \lambda = \hat{d}_j (\beta_j^2 + \frac{\tau_0^2}{\{(\boldsymbol{X}_{[,j]})^T\boldsymbol{X}_{[,j]}\}}) \\
    &\implies \hat{d}_j = \frac{2\lambda}{\tau_j^2 + \beta_j^2},
\end{align*}
where $\tau_j^2 = \frac{\tau_0^2}{\{(\boldsymbol{X}_{[,j]})^T\boldsymbol{X}_{[,j]}\}}$, $j = 1, \dots, p$. Further, 
\begin{align*}
    \frac{\partial^2 log\pi_{GD}(\boldsymbol{\beta}, \boldsymbol{d})}{\partial d_j^2} \Big|_{d_j = \hat{d}_j} &= -\frac{1}{2\hat{d}_j^2} - \frac{2\lambda - 1}{2\hat{d}_j^2} = -\frac{\lambda}{\hat{d}_j^2} \leq 0.
\end{align*}
Therefore, 
\begin{gather*}
    \underset{\boldsymbol{d}}{\text{arg max}}\{\pi_{GD}(\boldsymbol{\beta}, \boldsymbol{d})\} = (\frac{2\lambda}{\tau_1^2 + \beta_1^2}, \dots, \frac{2\lambda}{\tau_p^2 + \beta_p^2}).
\end{gather*}
Now, let $\hat{\boldsymbol{d}} = \underset{\boldsymbol{d}}{\text{arg max}}\{\pi_{GD}(\boldsymbol{\beta}, \boldsymbol{d})\}$, $BD_{\psi}(\cdot, \cdot)$ denote a Bregman divergence measure with $\psi$ denoting a strictly convex function, and $\tilde{\ell_0}(\cdot | \lambda, \tau_0)$ denote the $\ell_0$-norm approximation. Since $\pi_{GD}(\boldsymbol{\beta}, \boldsymbol{d}) \propto \pi_G(\boldsymbol{\beta} | \boldsymbol{d})$ with respect to $\boldsymbol{\beta}$, for $\tau_0$ small enough,  
\setlength{\belowdisplayskip}{3pt}
\begin{align*}
    \pi_{GD}(\boldsymbol{\beta}, \hat{\boldsymbol{d}}) &\propto \prod_{j = 1}^p \{\hat{d}_j^{1/2} exp(-\frac{\hat{d}_j}{2} \beta_j^2)\} \\
    &\propto \prod_{j = 1}^p \{(2\lambda \frac{\{(\boldsymbol{X}_{[,j]})^T\boldsymbol{X}_{[,j]}\} \beta_j^2}{\tau_0^2 + \{(\boldsymbol{X}_{[,j]})^T\boldsymbol{X}_{[,j]}\} \beta_j^2})^{1/2} exp(-\lambda \frac{\{(\boldsymbol{X}_{[,j]})^T\boldsymbol{X}_{[,j]}\} \beta_j^2}{\tau_0^2 + \{(\boldsymbol{X}_{[,j]})^T\boldsymbol{X}_{[,j]}\} \beta_j^2} \beta_j^2)\} \\
    &\propto exp\{- \lambda \sum_{j=1}^p (\frac{\{(\boldsymbol{X}_{[,j]})^T\boldsymbol{X}_{[,j]}\} \beta_j^2}{\tau_0^2 + \{(\boldsymbol{X}_{[,j]})^T\boldsymbol{X}_{[,j]}\} \beta_j^2})\} \\
    &= exp\{- \tilde{\ell_0}(\boldsymbol{\beta} | \lambda, \tau_0)\},
\end{align*}
which gives us the desired result. 

\begin{rem}
    Let $\hat{\boldsymbol{d}} = \underset{\boldsymbol{d}}{\text{argmax}}\left\{\underset{\boldsymbol{\beta}}{\text{max}}f_{\psi}(\boldsymbol{y} | \boldsymbol{\beta}) \pi_{GD}(\boldsymbol{\beta},\boldsymbol{d})\right\}$. Then, for any $\lambda \geq 0$ and $\tau_0 > 0$,
    \begin{equation*}
        \underset{\boldsymbol{\beta}}{\text{arg max}}\left\{f_{\psi}(\boldsymbol{y} | \boldsymbol{\beta}) \pi_{GD}(\boldsymbol{\beta}, \hat{\boldsymbol{d}})\right\} = \underset{\boldsymbol{\beta}}{\text{arg min}}\left\{BD_{\psi}\{\boldsymbol{y}, \boldsymbol{h(X\beta)}\} + \tilde{\ell_0}(\boldsymbol{\beta} | \lambda, \tau_0)\right\}.
    \end{equation*}
\label{argmax}
\end{rem}

\noindent{\textbf{Proof of Remark 1:}} Observe that by applying the dual property between the likelihood function and loss function, we get
\begin{align*}
    f_{\psi}(\boldsymbol{y} | \boldsymbol{\beta}) \pi_{GD}(\boldsymbol{\beta}, \hat{\boldsymbol{d}}) &\propto f_{\psi}(\boldsymbol{y} | \boldsymbol{\beta}) exp\{-\stackrel{\sim}{\ell_0} (\boldsymbol{\beta} | \lambda, \tau_0)\} \\
    &\propto exp\{-BD_{\psi}\{\boldsymbol{y}, \boldsymbol{h(X\beta)}\}\} exp\{-\stackrel{\sim}{\ell_0} (\boldsymbol{\beta} | \lambda, \tau_0)\} \\
    &= exp\{-(BD_{\psi}\{\boldsymbol{y}, \boldsymbol{h(X\beta)}\} + \stackrel{\sim}{\ell_0} (\boldsymbol{\beta} | \lambda, \tau_0))\}.
\end{align*}
Hence, 
\begin{align*}
    \underset{\boldsymbol{\beta}}{\text{arg max}}\{f_{\psi}(\boldsymbol{y} | \boldsymbol{\beta}) \pi_{GD}(\boldsymbol{\beta}, \hat{\boldsymbol{d}})\} = \underset{\boldsymbol{\beta}}{arg max} [exp\{-(BD_{\psi}\{\boldsymbol{y}, \boldsymbol{h(X\beta)}\} + \stackrel{\sim}{\ell_0} (\boldsymbol{\beta} | \lambda, \tau_0))\}].
\end{align*}
Since the natural log is a monotonic strictly increasing function and the arg max of a function is the same as the arg min of the negative of that function for strictly convex (or concave) functions, 
\begin{align*}
    \underset{\boldsymbol{\beta}}{arg max} [exp\{-(BD_{\psi}\{\boldsymbol{y}, \boldsymbol{h(X\beta)}\} + \stackrel{\sim}{\ell_0} (\boldsymbol{\beta} | \lambda, \tau_0))\}] &= \underset{\boldsymbol{\beta}}{arg max} \{-(BD_{\psi}\{\boldsymbol{y}, \boldsymbol{h(X\beta)}\} + \stackrel{\sim}{\ell_0} (\boldsymbol{\beta} | \lambda, \tau_0))\} \\
    &= \underset{\boldsymbol{\beta}}{arg min} \{BD_{\psi}\{\boldsymbol{y}, \boldsymbol{h(X\beta)}\} + \stackrel{\sim}{\ell_0} (\boldsymbol{\beta} | \lambda, \tau_0) \}.
\end{align*}
Therefore, it follows that
\begin{align*}
    \underset{\boldsymbol{\beta}}{\text{arg max}}\{f_{\psi}(\boldsymbol{y} | \boldsymbol{\beta}) \pi_{GD}(\boldsymbol{\beta}, \hat{\boldsymbol{d}})\} = \underset{\boldsymbol{\beta}}{\text{arg min}}\{BD_{\psi}\{\boldsymbol{y}, \boldsymbol{h(X\beta)}\} + \stackrel{\sim}{\ell_0}(\boldsymbol{\beta} | \lambda, \tau_0)\}.
\end{align*}

\end{document}